\newcommand{\mathbbm}[1]{\text{\usefont{U}{bbm}{m}{n}#1}} 
\DeclareMathOperator*{\argmax}{argmax} 
\newtheorem{thm}{Theorem}
\newtheorem{dfn}[thm]{Definition} 
\newtheorem{prp}[thm]{Proposition}
\newtheorem{observation}[thm]{Observation}
\newcommand{\epsdompure}{\texttt{$\epsilon$-Dom-MILP} }
\newcommand{\epsdommixed}{\texttt{$\epsilon$-Dom-Mixed-MILP} }
\newcommand{\evaluation}{\texttt{Evaluation-MILP} }
\title{Understanding Optimal Portfolios of Strategies\\ for Solving Two-player Zero-sum Games}
\author{
    Karolina Drabent,
    Ondřej Kubíček,
    Viliam Lisý
}
\begin{document}

\maketitle

\begin{abstract}
In large-scale games, approximating the opponent's strategy space with a small portfolio of representative strategies is a common and powerful technique. However, the construction of these portfolios often relies on domain-specific knowledge or heuristics with no theoretical guarantees. This paper establishes a formal foundation for portfolio-based strategy approximation. We define the problem of finding an optimal portfolio in two-player zero-sum games and prove that this optimization problem is NP-hard. We demonstrate that several intuitive heuristics—such as using the support of a Nash Equilibrium or building portfolios incrementally—can lead to highly suboptimal solutions. These negative results underscore the problem's difficulty and motivate the need for robust, empirically-validated heuristics. To this end, we introduce an analytical framework to bound portfolio quality and propose a methodology for evaluating heuristic approaches. Our evaluation of several heuristics shows that their success heavily depends on the specific game being solved. Our code is publicly available.\footnote{
\url{https://github.com/aicenter/portfolio_of_counterstrategies} }
\end{abstract}

\section{Introduction}
\label{sec:introduction}

Solving imperfect-information games like Poker or Stratego is a grand challenge in AI \citep{perolat2022mastering, pluribus_sandholm_2019}. The immense size of these games makes computing exact Nash equilibria intractable, necessitating the use of approximations, often through abstractions \citep{sandholm_abstraction_2015, kroer_unified_2018}. One successful abstraction technique involves restricting a player's strategy space to a small portfolio of strategies. This approach has been instrumental in state-of-the-art agents, such as those using Multi-Valued States (MVS) \citep{brown_depth-limited_2018}, where an opponent's vast strategic options are modeled via a compact portfolio.

While portfolios have appeared in various research works~\citep{bard_online_nodate,rahman_minimum_2024} on competitive games, they are often hand-crafted or constructed using heuristics with little to no theoretical backing. The central questions remain unanswered: What constitutes an optimal portfolio? How difficult is it to find one? And how can we evaluate the quality of a given portfolio?

This paper provides the first thorough investigation of these questions. Our contribution is a logical progression from theory to practice:
\begin{enumerate}[itemsep=0pt, parsep=0pt]
    \item \textbf{We formalize the problem} by defining the optimal portfolio selection problem within the clean framework of normal-form games, introducing a precise metric for portfolio quality: its exploitability.
    \item \textbf{We prove the problem is NP-hard}, establishing its inherent computational difficulty.
    \item \textbf{We demonstrate that intuitive heuristics fail}, showing via counterexamples that seemingly obvious approaches (e.g., using Nash equilibrium support or incremental construction) can yield poor solutions.
    \item \textbf{We provide a path towards principled heuristics} by introducing analytical tools for bounding portfolio quality and algorithms for computing portfolio exploitability.
    \item \textbf{We empirically evaluate} several heuristics using the proposed tools.
\end{enumerate}

While we focus on normal-form games, our work lays the theoretical groundwork necessary for developing more principled portfolio construction methods for the large-scale games where they are most needed.

\section{Related Work}
\label{sec:related_work}
An important portfolio searching algorithm is the Double Oracle (DO)~\citep{mcmahan2003planning}. It generates subsets of strategies for each player by iteratively adding best responses to the opponent’s current subset of actions. \citet{lanctot_unified_2017} introduced Policy-Space Response Oracles (PSRO), a generalized version of DO. We use DO as a baseline in our experiments.

Another class of methods is based on the transformations of the policy. In Multi-Valued States (MVS)~\citep{brown_depth-limited_2018} and Pluribus~\citep{pluribus_sandholm_2019}, a portfolio is found by "bias approach", which uses hard-coded transformations (e.g. multiplication of action probabilities) on the blueprint strategy (approximation of NE). \citet{kubicek-sepot-2024} propose an extension of the "bias approach". It uses gradients from the Regularized Nash Dynamics ~\citep{perolat2022mastering} algorithm to automatically find the transformations. We evaluate the performance of this approach and compare it to alternatives.

\citet{kroer_extensive-form_2014} define a Mixed-Integer Linear Program that computes an abstraction of an extensive form game that satisfies certain bounds. It could be adapted to create a portfolio, however it is not obvious how to adapt it to find optimal portfolios, as we define them in this paper.

\citet{rahman_minimum_2024} introduce a framework for a robust portfolio of strategies selection in Ad Hoc Teamwork Agents. This is defined in cooperative games. They introduce a concept of minimum coverage sets and use its approximation to create an algorithm for finding portfolios in the big space of strategies. However, certain definitions they use are not fully applicable in the context of competitive zero-sum games.

Implicit modeling uses portfolios for the opponent's exploitation. \citet{bard_online_2016} creates the portfolio by clustering strategies from the data. Unfortunately, the clustering method depends on having a dataset of strategies for the game that are often not available.

\section{Preliminaries}
\label{sec:notation_and_background}

In a two-player \textbf{normal-form game} $G=(A_1, A_2, u)$, $A_1$ and $A_2$ are sets of actions of player 1 and 2.  $A = A_1 \times A_2$ denotes all available action profiles, and $\Delta$ denotes a set of all probability distributions over a given set. For player $i \in \{1, 2\}$, an \textbf{(expected) utility function} $u_i: \Delta(A) \xrightarrow{} \mathcal{R} $, returns utility for a strategy profile $\pi=(\pi_1, \pi_2 ) \in \Delta(A)$.  $\pi(a)$ denotes the probability of players playing action profile $a\in A$. We refer to the opponent of player $i$ by $-i$. If $u_i(\pi) = -u_{-i}(\pi)$ for all $\pi \in \Delta(A)$, we say that the game is zero-sum. In this work, we will only consider \textbf{two-player zero-sum (2p0s) games}, hence, $u = u_1 = - u_2$.

A \textbf{pure strategy} $\pi_i$ of player $i$ is an action $a\in A_i$. A \textbf{mixed strategy} $\pi_i$ is a probability distribution over $A_i$. A strategy $BR_i(\pi_{-i})\in \Delta(A_i)$ is \textbf{best response} $BR_i(\pi_{-i}) \in \argmax_{\pi_i \in \Delta (A_i)} u_i(\pi_{i}, \pi_{-i})$. We call a strategy profile $\pi^*(G)$ a \textbf{Nash Equilibrium} (NE) of a game G if, for all agents $i$, $\pi^*_{i}$ is a best response to $\pi^*_{-i}$. We denote a set of all NE of a game $G$ as $NE(G)$. An $\epsilon$-Nash Equilibrium ($\epsilon$-NE), $\epsilon \geq 0$, is a strategy profile $\pi$ such that no player can gain more than $\epsilon$ by unilaterally deviating from their strategy. A \textbf{game value} $u^*=u(\pi^*)$ of 2p0s game is the utility value of any Nash Equilibrium  $\pi^*$. Following \citet{Cheng_Weaker_than_weak_dominance_2007}, an action $a_i \in A_i$ is \textbf{$\epsilon-$dominated}  ($\epsilon \geq 0$) iff there exists $\pi_i \in \Delta(A_i \setminus a_i)$ such that $u_i(\pi_i, \pi_{-i}) > u_i(a_i, \pi_{-i}) - \epsilon, \forall \pi_{-i} \in \Delta(A_{-i})$. A set of strategies $T\subset A_i $ is $\epsilon$-dominated iff there exists, for each $t \in T$, a mixed strategy $\pi^t_i \in \Delta(A_i \setminus T)$ such that $u_i(\pi^t_i, \pi_{-i}) > u_i(t, \pi_{-i}) - \epsilon, \forall \pi_{-i} \in \Delta(A_{-i})$.

A function $f_{NE}$, is the \textbf{equilibrium selecting function}, which for a game G, returns a single NE strategy profile  $f_{NE}(G) = \pi$, $\pi \in \text{NE}(G)$. We denote $[n] = \{1, \cdots , n\}$ the set of natural numbers up to $n$.
For $(\pi_i, \pi_{-i}) \in \Delta(A)$, \textbf{regret for action} $a_i\in A_i$ of player $i$ is  defined as $R_i(\pi_i, a_i, \pi_{-i}) = u_i(a_i, \pi_{-i}) - u_i(\pi_i, \pi_{-i})$.

In 2p0s game the \textbf{Maximum Entropy Correlated Equilibrium} (Maxent)~\cite{maxent-pmlr-v2-ortiz07a} is $\pi^{ME} = \argmax_{\pi^* \in NE} H(\pi)$, where $H(\pi)$ is the (Shannon) entropy of the strategy $H(\pi) = \sum_{a\in A} \pi(a)ln(\frac{1}{\pi(a)})$. 

\section{Portfolios}
\label{sec:Portfolios}
A portfolio is a subset of possible actions or strategies that a player will use instead of their actions in the original game. We now formally define a portfolio.
\begin{dfn}
    A \textit{(Mixed) Portfolio} of size $k \in \mathcal{N}^+$ of player $i$ in the normal form game $G$ is a subset of mixed strategies $P = \{ \pi_{1}, \pi_{2}, ... , \pi_{k} \}, P \subseteq \Delta (A_i)$, so that $|P|=k$.
\end{dfn}
 
Additionally, we define \textit{Pure Portfolio} $P$ as such that consist only of pure strategies, which means that portfolio $P$ of player $i$ is a subset of actions $P \subseteq A_i$. We will refer to the z-th strategy in the portfolio $P=\{\pi_1,...,\pi_{|P|}\}$ as $P(z)=\pi_z, z\in[|P|]$.

A normal form game $G$ can be restricted by a portfolio for a player $i$. In that case, that player is allowed to only choose the actions from the portfolio.
\begin{dfn}
    
    A 2p0s game $G=(A_1,A_2, u)$ restricted by portfolios $P_1,P_2$ is called a \textit{restricted normal form game} $G(P_1, P_2) = (P_1, P_2, u')$, where utility function is constructed with $u'(\pi_1, \pi_2) = \sum_{z_1=1}^{|P_1|} \sum_{z_2=1}^{|P_2|} \pi_1(z_1)\cdot \pi_2(z_2) \cdot u(P_1(z_1),P_2(z_2)), \pi_i \in \Delta(P_i)$.
    
\end{dfn}

\subsection{Strategy Optimization}
\label{subsec:strategy_optimization}

\begin{figure}[t] 
    \centering
    \includegraphics[width=8.5cm]{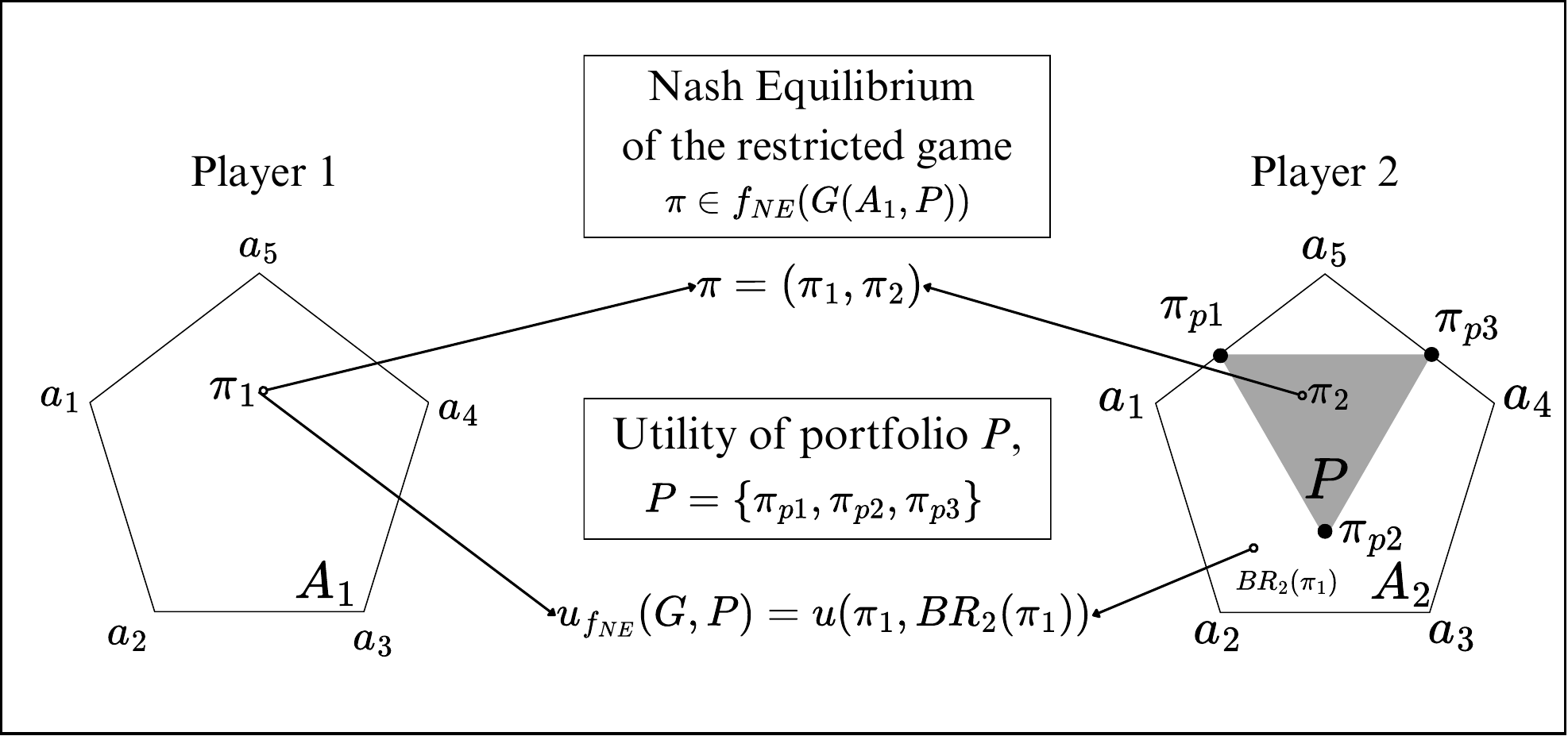}
    \caption{This diagram illustrates the strategy optimization process using a portfolio $P=\{\pi_{p1},\pi_{p2},\pi_{p3} \}$ in a game $G=(A_1, A_2,u)$. The strategy space is a multidimensional simplex mapped to two dimensions for simplicity. On the left, the simplex represents the strategy space of Player 1, $\Delta(A_1)$, while on the right, of Player 2, $\Delta(A_2)$. Within the latter, the gray simplex represents the strategy space defined by the portfolio $\Delta(P)$. The NE of the restricted game $G(A_1, P)$, $\pi, \pi \in f_{NE}(G(A_1, P))$, is computed for $\Delta(A_1)$ and $\Delta(P)$. Then, utility of portfolio P, $u_{f_{NE}}(G,P)$ is computed by fixing Player 1's strategy and finding best response of Player 2 in the full strategy space $\Delta(A_2)$.}
    \label{fig:strategy_optimization_scheme_simplex}
\end{figure}

In this paper, we focus on identifying the opponent's portfolio that enables the best performance in the original game, as in e.g. MVS~\citep{brown_depth-limited_2018}. We define this process within the normal form games and assume that strategy optimization is conducted for player 1 while player 2 is restricted by the portfolio. Initially, we are given a two-player zero-sum normal form game, $G=(A_1, A_2, u)$, and a portfolio $P$ of size $k$. Then a restricted game, $G(A_1, P)$, is constructed and one of its NE, $\pi^*_R=f_{NE}( G(A_1, P))$, is computed. Player 1 adopts this strategy in the original game $G$, while Player 2 plays best response, $BR_2(\pi^*_{R1})$. This defines portfolio utility, i.e., how much a player can lose by assuming the portfolio while facing an opponent not restricted by the portfolio. This process is illustrated in Figure~\ref{fig:strategy_optimization_scheme_simplex}.

\begin{dfn}
The \textit{utility of portfolio} $P$ in a game $G$ and for a NE selecting function $f_{NE}$ is the utility that player 1 gets when assuming the portfolio:
$u_{f_{NE}}(G,P) =  \min_{\substack{\pi'_2 \in \Delta(A_2)}}  u(\pi_{1}, \pi'_2)$, where $\pi= (\pi_1,\pi_2) = f_{NE}(G(A_1,P))  $.
\end{dfn}

In this work, we focus on three kinds of portfolio utility, based on equilibrium selection functions:
\begin{itemize}
    \item Pessimistic utility of portfolio $P$: $u_{PES}(G,P) = 
    \min_{\substack{\pi \in NE(G(A_1, P))}} \min_{\substack{\pi' \in \Delta(A_2)}} u(\pi_1, \pi'_2)$. The pessimistic equilibrium selection function finds such an equilibrium that in the original game, player 1 will have the lowest reward (to the player 2's best response). 
    \item Maxent utility of portfolio $u_{ME}(G,P)=\min_{\substack{\pi'_2 \in \Delta(A_2)}} u(\pi^{ME}_1, \pi'_2)$ where $\pi^{ME}$ is the Maximum Entropy Correlated Equilibrium.
    \item RM+ utility of portfolio is $u_{RM+}(G,P) = \min_{\substack{\pi'_2 \in \Delta(A_2)}} u(\pi_1, \pi'_2)$, where $\pi = \text{RM+}(G(A_1,P))$. RM+ (Regret Matching +) function returns the equilibrium which RM+ would return.
    \item The Optimistic Utility of Portfolio $u_{OPT}(G,P) = 
    \max_{\substack{\pi' \in NE(G(A_1,P))}} \min_{\substack{\pi_2 \in \Delta(A_2)}} u(\pi'_1, \pi_2)$. Optimistic function is the opposite of pessimistic function and returns an equilibrium in which player 1 would have the highest utility in the original game.
\end{itemize}

For any game $G$, portfolio $P$ and equilibrium selection function $f_{NE}$ the portfolio utility $u_{f_{NE}}$ is bounded by optimistic and pessimistic utilities: $u_{OPT}(G,P) \geq u_{f_{NE}}(G,P)\geq u_{PES}(G,P)$. However, computing these utilities requires knowledge beyond the restricted game. Nevertheless, we use the pessimistic case as it reflects the portfolio's performance in the worst-case scenario. We also consider $ex_{RM+}$ to be important, as in practice the NE is found by the CFR algorithm which in normal form games is equivalent to Regret Matching+~\cite{farina2023regretmatchinginstabilityfast}.

Since the objective is to find a strategy for player 1 the closest to NE strategy ($\pi =(\pi_1, \pi_2) \in\epsilon$-NE with smallest $\epsilon$), we want to know how much worse the player is for assuming the portfolio $u_{f_{NE}}(G,P)$ then playing without any restriction on actions $u(\pi), \pi \in NE(G)$. In other words, we want to know the  exploitability of using the portfolio:
\begin{dfn}
    For an equilibrium selection function $f_{NE}$ an exploitability of portfolio of player 2 $P$ in the 2p0s game $G$ is defined as exploitability of strategy $\pi, \pi \in f_{NE}(G(A_1,P)$.
    $ex_{f_{NE}}(G, P)= u(\pi^*)-u_{f_{NE}}(G, P), \pi^* \in \text{NE(G)}$.
    \label{dfn:exploitability}
\end{dfn}

\noindent
\textbf{The optimal portfolio} is one with minimal exploitability.

\section{The Hardness of the Optimal Portfolio Selection}

Having formalized the optimal portfolio selection problem, we now investigate its difficulty. We first establish that the problem is NP-hard, formally justifying the need for heuristics. We then demonstrate that several intuitive and widely used heuristic approaches are fundamentally flawed, as they can lead to arbitrarily poor portfolios. Lastly, we show that the game structure is important in this problem.

\subsection{Computational Complexity}
Our first main result establishes that finding an optimal portfolio is computationally intractable.
\begin{thm}
    Let $n$ be the number of pure strategies of Player 2 and $k<n$ the desired size of a portfolio. Deciding whether there is a portfolio of size $k$ with an exploitability lower than $\frac{1}{2n}$ is an NP hard problem. Consequently, finding the optimal portfolio of a given size is also NP hard.
\end{thm}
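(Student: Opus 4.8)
The plan is to reduce from a known NP-hard problem whose structure mirrors "select $k$ objects that jointly cover/dominate everything." The most natural candidates are \textsc{Vertex Cover}, \textsc{Dominating Set}, or \textsc{Set Cover}. I would choose a reduction where, given an instance with ground set size $n$ and target $k$, I construct a 2p0s normal-form game $G$ with $n$ pure strategies for Player 2 (one per element to be covered/selected) such that a portfolio $P$ of size $k$ for Player 2 achieves exploitability below the threshold $\frac{1}{2n}$ if and only if the original instance has a solution of size $k$. The threshold $\frac{1}{2n}$ strongly suggests that the intended "good" case yields exploitability $0$ (an exact equilibrium-supporting portfolio), while every "bad" case forces exploitability at least $\frac{1}{n}$ or so, and the $\frac{1}{2n}$ simply sits safely in the gap — so the reduction should be designed to produce a gap, making the decision problem robust.

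Concretely, I would try the following construction. Let Player 2's actions be indexed by the $n$ elements (vertices). Design Player 1's action set and the payoff matrix $u$ so that (i) the full game $G$ has game value $u^* = 0$, with a Nash equilibrium whose Player-2 marginal is uniform over some distinguished subset, and (ii) when Player 2 is restricted to a portfolio $P$, the restricted game's equilibrium strategy $\pi_1$ for Player 1, when played in the full game, is exploitable by exactly the amount by which $P$ fails to "cover" the combinatorial structure. A clean way to enforce this: give Player 1 one "test" action per constraint (edge/set), whose payoff against Player 2's pure action is positive exactly when that Player-2 action does \emph{not} participate in that constraint. Then a portfolio $P$ lets Player 1 hedge across all constraints (keeping exploitability $0$) precisely when $P$ hits every constraint, i.e. forms a cover of size $k$; otherwise some Player-1 test action remains profitable and the best response of Player 2 in the full game punishes $\pi_1$ by a margin bounded below by a function of $1/n$. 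The pure-vs-mixed distinction and the choice of $f_{NE}$ should be made irrelevant by arranging that in the "yes" instances every equilibrium of $G(A_1,P)$ has exploitability $0$, so the claim holds simultaneously for $u_{PES}, u_{ME}, u_{RM+}$, and $u_{OPT}$.

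The key steps, in order: (1) fix the source problem and state its instance $(\text{ground set}, \text{constraints}, k)$; (2) give the explicit payoff matrix of $G$ as a function of the instance, with all entries in $[-1,1]$ or similar and polynomially many actions for Player 1; (3) compute the game value $u^*$ of $G$ and exhibit an equilibrium, establishing the baseline against which exploitability is measured; (4) prove the "yes" direction: a size-$k$ cover induces a portfolio whose restricted equilibrium strategy for Player 1 is an exact equilibrium strategy of $G$, hence exploitability $0 < \frac{1}{2n}$; (5) prove the "no" direction: if no size-$k$ cover exists, then for \emph{every} portfolio $P$ of size $k$, Player 1's forced equilibrium strategy in $G(A_1,P)$ leaves an uncovered constraint, and I bound below the regret that Player 2's best response extracts, showing exploitability $\geq \frac{1}{2n}$ (or a cleaner constant/$n$ bound); (6) conclude NP-hardness of the decision problem and hence of optimal portfolio selection by noting the optimum is below the threshold iff the decision answer is yes.

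I expect step (5) — the soundness / "no" direction — to be the main obstacle. The difficulty is that the portfolio consists of \emph{mixed} strategies drawn from the entire simplex $\Delta(A_2)$, not just pure actions, so one must argue that mixing cannot manufacture coverage that the supports lack; this likely requires choosing the payoffs so that a Player-1 test action's value is an affine (or convex) function of Player 2's mixed strategy that stays bounded away from the equilibrium value unless the \emph{support} of the mixture hits the relevant constraint, together with a careful lower bound on the resulting exploitability that is uniform over all $\binom{\Delta(A_2)}{k}$ portfolios and all equilibrium selection rules. A secondary subtlety is ensuring the exploitability baseline $u(\pi^*)$ is well-defined and unaffected by the reduction gadget, and that the $\frac{1}{2n}$ threshold genuinely separates the two cases after the payoff scaling is fixed.
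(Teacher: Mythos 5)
Your high-level plan is the same one the paper follows: a reduction from \textsc{Set Cover} in which a size-$k$ portfolio achieves exploitability below $\frac{1}{2n}$ exactly when a size-$k$ cover exists, with the threshold sitting inside a deliberately engineered gap. However, what you have written is a roadmap rather than a proof, and the parts you defer are precisely the parts that carry the argument. You never commit to a source problem, never write down a payoff matrix, never exhibit the game value or an equilibrium of $G$, and you explicitly leave open your step (5), the soundness direction. That step is not routine: you must show that \emph{every} size-$k$ portfolio of \emph{mixed} strategies (not just pure selections of elements), under \emph{every} equilibrium selection rule for the restricted game, is exploitable by at least $\frac{1}{2n}$ whenever no size-$k$ cover exists. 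In particular, $k$ mixed strategies can have supports whose union covers far more than $k$ pure actions, and the restricted-game equilibrium only sees convex combinations of the portfolio columns; ruling out that such mixing ``manufactures coverage'' is exactly the difficulty you flag, and your proposed remedy (arrange that all restricted equilibria in yes-instances are exact) is an additional property your unspecified gadget would have to be proved to satisfy, not something you establish. So, concretely, the gap is: no explicit construction, no computed baseline $u(\pi^*)$, and no lower bound for the ``no'' direction.

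For comparison, the paper does not build a gadget from scratch: it reuses the game construction of Gilboa and Zemel (1989) from their NP-hardness proof for minimum-support Nash equilibria, with the roles of the two players reversed (the portfolio is chosen for the player whose strategies correspond to the sets, to approximate the other player's equilibrium strategy). In that game, a portfolio forming a set cover guarantees the restricted player a payoff of at least $\frac{1}{n}$, while if some element is uncovered the opponent plays the action corresponding to that element and drives the payoff to $0$; since a single action already secures at least $\frac{1}{2n}$ in the full game, any non-covering portfolio incurs exploitability at least $\frac{1}{2n}$, which yields the stated threshold. If you want to complete your write-up, the most economical route is the same: import that construction explicitly and then do the work you postponed in step (5) — verify the $\frac{1}{2n}$ lower bound uniformly over mixed portfolios and over equilibrium selection functions — rather than hoping a fresh test-action gadget will make those issues disappear.
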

\textbf{Proof sketch:} The proof is by reduction from the set cover problem, using the same game construction as \cite{gilboa1989nash} use to prove that the problem of finding the minimum support Nash equilibrium is NP hard, with the roles of the players reversed. 
To align with the original proof, we assume we choose a portfolio for Player 1 to approximate a NE strategy of Player 2.

In the proof, Player 1 has to pick a set of $k$ strategies to form the set cover. This guarantees her a payoff of at least $\frac{1}{n}$. If the set is not covered, Player 2 can pick a strategy corresponding to the element that is not covered. This gives her a reward of 0, while Player 1 can trivially ensure a reward of at least $\frac{1}{2n}$ by choosing a single action. Hence, any portfolio that is not fully covering the set has a cost of at least $\frac{1}{2n}$.$\qed$

\subsection{Problems with the Intuitive Heuristics}

The NP-hardness of the problem necessitates the use of heuristics. However, are the intuitive heuristics effective? In this subsection, we demonstrate that several natural ideas for constructing portfolios are unreliable and can produce solutions with high exploitability.

\subsubsection{The Nash Equilibrium Support}

It's tempting to construct a portfolio from the strategies used in the Nash Equilibrium. The intuition is that strategies within its support are, in some sense, essential to optimal play. However, this intuition is misleading.

\begin{thm}
\label{thm:arbitrary_ne}
Using the support of an arbitrary Nash equilibrium may lead to the pessimistic exploitability of $\frac{1}{2}\Delta$ and the maximum entropy exploitability approaching $\frac{1}{2}\Delta$, where $\Delta$ is the payoff range.
\end{thm}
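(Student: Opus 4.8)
The plan is to exhibit an explicit family of games, together with one of their Nash equilibria, such that the portfolio formed from that equilibrium's support is essentially worst possible. Normalize payoffs to lie in $[0,1]$, so the payoff range is $\Delta=1$ and the targets are exploitability exactly $\tfrac12$ in the pessimistic case and $\tfrac12-o(1)$ in the maximum-entropy case (the maximum conceivable exploitability being $\Delta$).

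I would take the following construction. Player 2 has actions $\{b_0,b_1\}$ and Player 1 has actions $\{a_0,a_1,\dots,a_m\}$. Make the $b_0$-column constant, $u(a_i,b_0)=\tfrac12$ for every $i$, and in the $b_1$-column put $u(a_0,b_1)=\tfrac12$ and $u(a_i,b_1)=0$ for all $i\ge 1$. The first step is the routine verification of the game-theoretic facts: the value of $G$ is $\tfrac12$; $a_0$ is Player 1's \emph{unique} maximin strategy; and every $\pi_2\in\Delta(\{b_0,b_1\})$ is a minimax strategy for Player 2 (because $\max_i u(a_i,\pi_2)=\tfrac12$ for all $\pi_2$). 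Consequently $(a_0,b_0)$ is a Nash equilibrium of $G$ whose Player-2 support is exactly $\{b_0\}$, and taking this (arbitrary) equilibrium yields the portfolio $P=\{b_0\}$ with $|P|=1<2=|A_2|$.

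The second step analyzes the restricted game $G(A_1,P)$. Since Player 2 is confined to the single action $b_0$ and that column is constant, \emph{every} strategy of Player 1 is a best response, so $NE(G(A_1,P))=\{(\pi_1,b_0):\pi_1\in\Delta(A_1)\}$. For the pessimistic utility, the selection function may pick $\pi_1=a_1$; Player 2's best response in $G$ is then $b_1$ with payoff $0$, so $u_{PES}(G,P)=0$ and $ex_{PES}(G,P)=\tfrac12-0=\tfrac12\Delta$. For the maximum-entropy utility, each equilibrium's correlated distribution is the product of some $\pi_1\in\Delta(A_1)$ with the point mass on $b_0$, and has entropy $H(\pi_1)$, so the Maxent equilibrium takes $\pi_1$ uniform over $A_1$; its best response in $G$ is again $b_1$, giving $u_{ME}(G,P)=\tfrac{1}{2(m+1)}$ and $ex_{ME}(G,P)=\tfrac12-\tfrac{1}{2(m+1)}$, which approaches $\tfrac12\Delta$ as $m\to\infty$. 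The same family thus witnesses both claims at once, and the same argument gives the analogous bound for $ex_{RM+}$ once one checks that RM+ on the constant restricted game converges to the uniform strategy.

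The part requiring care is making the three roles of the payoff matrix coexist: (i) $b_0$ must remain a best response against $a_0$ so that $\{b_0\}$ is genuinely an equilibrium support of $G$; (ii) Player 1's equilibrium play must be pinned to the robust action $a_0$ in the \emph{full} game yet be completely unconstrained once Player 2 is restricted to $b_0$; and (iii) the decoy actions $a_1,\dots,a_m$ must look harmless inside the restricted game but be jointly disastrous against the withheld action $b_1$. The constant-$\tfrac12$ column for $b_0$ is exactly what reconciles all three; everything else is the short bookkeeping of the maximin/minimax values and of the best responses in $G$.
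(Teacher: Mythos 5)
Your construction is correct and, in substance, follows the same route as the paper's own proof: an explicit counterexample in which the column(s) forming the NE-support portfolio are constant and equal to the game value, so that every Player-1 strategy is an equilibrium of the restricted game, the pessimistic selection then picks a strategy that a withheld column punishes, and the maximum-entropy case is handled by diluting the uniform strategy over many interchangeable rows (the paper does this by adding copies of a row; you do it by letting $m\to\infty$). All the game-theoretic checks you list do go through: the value is $\tfrac12$, $(a_0,b_0)$ is an equilibrium with Player-2 support $\{b_0\}$, $NE(G(A_1,P))=\Delta(A_1)\times\{b_0\}$, and the best responses and resulting utilities are as you state. The one slip is the bookkeeping of $\Delta$: the payoff range is intrinsic to the game (the paper computes it as the largest minus the smallest entry of $U$, cf.\ the proof of Theorem~\ref{thm:game_pes_exploit}), and your matrix contains only the entries $0$ and $\tfrac12$, so its payoff range is $\Delta=\tfrac12$; declaring that payoffs ``lie in $[0,1]$'' does not make $\Delta=1$. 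As written, your family therefore exhibits pessimistic exploitability equal to $\Delta$ and maxent exploitability approaching $\Delta$ --- a stronger negative statement than the theorem, and certainly enough to show the heuristic can lose at least $\tfrac12\Delta$ --- but no game in your family attains the value $\tfrac12\Delta$ exactly. If you want the stated constant, patch the construction by adding one strictly dominated row for Player 1 with both payoffs $-\tfrac12$: it receives zero probability in every equilibrium of both the full and the restricted game, so none of your computations change, while the payoff range becomes $1$ and the two exploitabilities become exactly $\tfrac12\Delta$ and approaching $\tfrac12\Delta$. The closing remark about $ex_{RM+}$ is not needed for this theorem.
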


\begin{thm}
\label{thm:unique_ne}
Using the unique support of the column player over all Nash Equilibria as the portfolio can have an exploitability arbitrarily close to $\frac{1}{2}\Delta$.
\end{thm}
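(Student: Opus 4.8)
The plan is to build a small game in which the column player's entire set of Nash-equilibrium strategies is supported on a proper subset of its actions --- so that this subset is exactly ``the unique support of the column player over all Nash equilibria'' --- yet restricting the column player to that subset turns the restricted game degenerate, so that an adversarially (pessimistically) selected equilibrium of the restricted game leaves the row player with a strategy that one of the omitted actions punishes by half the payoff range.

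Concretely, I would take Player~1 with actions $\{1,2\}$, Player~2 with actions $\{1,2,3\}$, and the payoff matrix (rows $=$ Player~1, columns $=$ Player~2)
\[
\begin{pmatrix} 0 & 0 & 1 \\ 0 & 0 & -1 \end{pmatrix},
\]
whose payoff range is $\Delta = 2$ (further all-zero columns may be appended to enlarge the resulting portfolio without affecting anything below). The opening steps are routine computations. The game value is $u^*=0$: column~1 holds Player~1 to $0$, and row~1 guarantees Player~1 at least $\min(0,0,1)=0$. The column player's set of optimal (minimax) strategies is exactly $\Delta(\{1,2\})$, since any strategy placing weight $\beta>0$ on column~3 gives row~1 the payoff $\beta>0=u^*$ and is therefore not minimax-optimal, whereas columns $1$ and $2$ are both used by optimal strategies. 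Hence column~3 lies in no equilibrium support and the union of the column player's supports over all Nash equilibria is exactly $P=\{1,2\}$ --- the portfolio the heuristic under scrutiny produces.

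Next I would analyze the restricted game $G(A_1,P)$, whose $A_1\times P$ block is identically $0$. Its value is again $0$ and \emph{every} Player~1 strategy is maximin-optimal there, so $(\pi_1,\delta_1)$ is a Nash equilibrium of $G(A_1,P)$ for every $\pi_1\in\Delta(A_1)$. The pessimistic selection function therefore returns the $\pi_1$ that fares worst against a best-responding Player~2, namely $\pi_1=\delta_2$ (committing to action~2), against which Player~2 best-responds with column~3 for payoff $-1$. Thus $u_{PES}(G,P)=-1$ and $ex_{PES}(G,P)=u^*-u_{PES}(G,P)=1=\tfrac12\Delta$; in particular the exploitability of this portfolio can be made arbitrarily close to $\tfrac12\Delta$.

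The crux --- and what keeps the theorem from being a one-line remark --- is that the construction must genuinely defeat the stated heuristic: the punishing action must provably lie outside \emph{every} equilibrium support (so that the portfolio really is the full union of supports), while the restricted game must simultaneously acquire \emph{new} optimal Player~1 strategies relative to $G$ (otherwise the row player is forced onto its unique, hence unexploitable, optimum and the exploitability is $0$). The degeneracy of the $A_1\times P$ block reconciles these demands: it is exactly what both keeps column~3 out of all equilibria of $G$ and enlarges the row player's optimal set so that it contains the exploitable $\delta_2$. I would finally note that the maximum-entropy and RM+ selections applied to this degenerate restricted game return (essentially) the uniform Player~1 strategy, which is unexploitable here, so the $\tfrac12\Delta$ bound above is established for the pessimistic notion; matching it under an entropy-based selection would require a different, non-degenerate construction in the spirit of Theorem~\ref{thm:arbitrary_ne}.
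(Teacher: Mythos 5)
There is a genuine gap, and it lies in the hypothesis rather than in the arithmetic. In your game $\begin{pmatrix} 0 & 0 & 1 \\ 0 & 0 & -1 \end{pmatrix}$ the column player's set of minimax strategies is all of $\Delta(\{1,2\})$, so there are Nash equilibria in which her support is $\{1\}$, others in which it is $\{2\}$, and others in which it is $\{1,2\}$. Hence the column player does \emph{not} have a unique support over all Nash equilibria; what you use as the portfolio is the \emph{union} of the supports of all equilibria. That is a different heuristic from the one the theorem targets. The point of this theorem, in contrast to Theorem~\ref{thm:arbitrary_ne}, is that the failure is not an artifact of picking the wrong equilibrium among many: the paper perturbs the Theorem~\ref{thm:arbitrary_ne} matrix by a small $\delta>0$ (payoffs $1,\delta,\tfrac12$ / $\delta,1,\tfrac12$ / $0,0,\tfrac12$) precisely so that the column player's equilibrium is the single strategy $(0,0,1)$, making her support literally unique; restricting to that one constant column makes every row strategy optimal in the restricted game, and the pessimistic selection then loses $\tfrac12$ of the (unit) payoff range. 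Your construction cannot instantiate this statement, and it is not a harmless reinterpretation: under your union-of-supports reading the Theorem~\ref{thm:arbitrary_ne} game would already be a non-example (there the union is the full action set, giving exploitability $0$), which is exactly why the paper needed the $\delta$-perturbed game here.

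The rest of your reasoning is sound and mirrors the paper's technique: make the restricted game degenerate so that the pessimistic equilibrium selection is free to pick an exploitable row strategy, and your exploitability computation ($1=\tfrac12\Delta$ with $\Delta=2$) is correct for the statement you actually prove. But to repair the proof you would have to force uniqueness of the column player's equilibrium while keeping the restricted game degenerate, e.g.\ a single column constant at the game value with all other columns strictly suboptimal for the column player against every row-player equilibrium --- which is essentially reconstructing the paper's $\delta$-perturbed matrix. Also note your closing remark about maximum-entropy and RM+ selections is fine but tangential: as in the paper, only the pessimistic notion is needed for this theorem.
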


\textbf{Proof}
We will demonstrate it on the following matrix game for any small $\delta > 0$:
$$
U = \begin{bmatrix}
1 & \delta & \frac{1}{2} \\
\delta & 1 & \frac{1}{2} \\
0 & 0 & \frac{1}{2} \\
\end{bmatrix}
$$

This game has a unique equilibrium $(0,0,1)$ for player 2. When only its support is in the portfolio, in the pessimistic case, player 1 would choose to play $(0,0,1)$, which results in player 2 exploiting with action $(1,0,0)$ or $(0,1,0)$, resulting in exploitability of $-\frac{1}{2}$
$\qed$

This result demonstrates that NE support is not a reliable indicator of a good portfolio. The strategies that are crucial for maintaining an equilibrium against a rational opponent are not necessarily the same strategies that form a robust portfolio. Furthermore, not even the size of a strong portfolio can be determined by the size of the support.

\begin{observation}
\label{obs:size_portfolio_suppport}
    The size of the portfolio with 0 pessimistic exploitability is not necessarily equal to the size of the support of NE of player 2.
\end{observation}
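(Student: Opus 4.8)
The plan is to reuse the $3\times 3$ matrix game $U$ from the proof of Theorem~\ref{thm:unique_ne} and, inside it, compare three quantities: the size of the (unique) support of Player~2's Nash equilibrium, the pessimistic exploitability of the portfolio equal to that support, and the sizes of portfolios that actually attain $0$ pessimistic exploitability. We already know from the proof of Theorem~\ref{thm:unique_ne} that $U$ has value $u^\ast=\tfrac12$, that Player~2's unique equilibrium is the pure strategy ``column~$3$'' (so the support has size $1$), and that the portfolio $\{\text{column }3\}$ --- exactly this support --- has strictly positive pessimistic exploitability (in fact $\tfrac12=\tfrac12\Delta$). So it suffices to (i) exhibit a portfolio of a \emph{different} size with $0$ pessimistic exploitability, and (ii) rule out size $1$ entirely, so that in $U$ the claim ``the portfolio size needed for $0$ pessimistic exploitability equals the equilibrium support size'' is false.

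For step (i) I would take $P=\{\text{column }1,\text{column }2\}$, of size $2$. In $G(A_1,P)$ Player~1's row~$3$ is strictly dominated (it pays $0$ against both available columns while rows~$1$ and~$2$ pay positive amounts), hence has probability $0$ in every equilibrium; the surviving game is the symmetric $2\times2$ game with diagonal payoffs $1$ and off-diagonal payoffs $\delta$, whose unique equilibrium has Player~1 play $(\tfrac12,\tfrac12)$. Thus every NE of $G(A_1,P)$ has $\pi_1=(\tfrac12,\tfrac12,0)$, and since $u$ is linear in the opponent's strategy, $u_{PES}(G,P)=\min_{c}u\big((\tfrac12,\tfrac12,0),c\big)=\min\{\tfrac12+\tfrac{\delta}{2},\ \tfrac12+\tfrac{\delta}{2},\ \tfrac12\}=\tfrac12=u^\ast$, so $ex_{PES}(G,P)=0$. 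For step (ii), for an arbitrary size-$1$ portfolio $\{q\}$ with $q\in\Delta(A_2)$, Player~2 has no choice in $G(A_1,\{q\})$, so Player~1's equilibrium strategies are precisely the maximizers of the linear map $\pi_1\mapsto u(\pi_1,q)$; this set is a face of $\Delta(A_1)$ and therefore contains a vertex $e_r$, so $(e_r,q)$ is an equilibrium of $G(A_1,\{q\})$. The unrestricted Player~2 can punish rows~$1$ and~$2$ down to $\delta$ and row~$3$ down to $0$, hence $u_{PES}(G,\{q\})\le\delta$ and $ex_{PES}(G,\{q\})\ge\tfrac12-\delta>0$ for $\delta<\tfrac12$. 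Consequently, in $U$ the smallest portfolio attaining $0$ pessimistic exploitability has size $2$, while the support of Player~2's NE has size $1$, which proves the observation.

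I expect step (ii) to be the only genuine obstacle: one must be careful that ``portfolio of size~$1$'' ranges over \emph{all} mixed strategies $q$, not just pure columns, and must justify that in $G(A_1,\{q\})$ Player~1's equilibrium strategies coincide with $BR_1(q)$ (because Player~2 is a dummy), so they always include a vertex of $\Delta(A_1)$ that an unrestricted Player~2 can exploit to at most $\delta$. Everything else --- the dominated-row elimination in $G(A_1,P)$ and the two best-response computations --- is routine and mirrors the arguments already used for Theorems~\ref{thm:arbitrary_ne} and~\ref{thm:unique_ne}.
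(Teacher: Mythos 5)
Your proof is correct, and it uses the same counterexample game as the paper (the $3\times 3$ matrix with off-diagonal $\delta$ from Theorem~\ref{thm:unique_ne}), but it goes further than the paper's argument in two useful ways. The paper's proof simply enumerates the three \emph{pure} portfolios of size $1$ and checks that each has pessimistic exploitability at least $\tfrac12-\delta>0$; it neither exhibits a larger portfolio attaining exploitability $0$ nor addresses single-element \emph{mixed} portfolios, even though Definition~1 allows them. Your step~(i) supplies an explicit size-$2$ witness $\{c_1,c_2\}$ with $ex_{PES}=0$ (the dominated-row elimination and the computation $u\bigl((\tfrac12,\tfrac12,0),c_3\bigr)=\tfrac12=u^\ast$ are both correct), and your step~(ii) closes the mixed case: in $G(A_1,\{q\})$ player~2 is a dummy, so the equilibria of the restricted game are exactly the pairs (best response to $q$, $q$), the linear map $\pi_1\mapsto u(\pi_1,q)$ attains its maximum at some vertex $e_r$, and the pessimistic selection may pick that vertex, which the unrestricted player~2 can hold to at most $\delta$, giving $ex_{PES}\ge\tfrac12-\delta>0$. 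So your version establishes the stronger and cleaner statement that in this game the minimal size of a zero-exploitability portfolio is $2$ while the equilibrium support has size $1$, whereas the paper's proof only rules out pure size-$1$ portfolios; the price is the extra vertex-of-a-face argument, which you justify correctly.
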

As the proofs of Theorem~\ref{thm:arbitrary_ne} and the Observation~\ref{obs:size_portfolio_suppport} are similar to the above proof, they are provided in Appendix~\ref{app:proof_arbitrary_ne}.

\noindent
This brings us to the observation about the Double Oracle. 

\begin{observation}
    Double-oracle may find bad portfolios.
\end{observation}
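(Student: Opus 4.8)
The plan is to exhibit one small normal-form game together with an explicit run of Double Oracle (DO) whose output portfolio for player~2 has pessimistic exploitability $\tfrac12\Delta$, i.e., as bad as the Nash-support portfolios of Theorems~\ref{thm:arbitrary_ne} and~\ref{thm:unique_ne}. I would use the $3\times3$ game
\begin{equation*}
U=\begin{bmatrix} 1 & 0 & 1 \\ 0 & 1 & 1 \\ \tfrac12 & \tfrac12 & 0 \end{bmatrix},
\end{equation*}
with player~1 the maximizing row player and player~2 the minimizing column player, so the payoff range is $\Delta=1$; a short computation gives $u^{*}=\tfrac12$ (attained, e.g., by player~1 playing $(\tfrac12,\tfrac12,0)$), with player~2's unique equilibrium strategy being $(\tfrac12,\tfrac12,0)$.

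First I would trace DO started from the singletons $P_1=\{r_1\}$ and $P_2=\{c_1\}$. The restricted game has value $1$; player~1's best response to $c_1$ is $r_1$ (no improvement), while player~2's best response to $r_1$ is $c_2$ with value $0<1$, so $c_2$ is added. Next, player~1's best response to $c_2$ is $r_2$ with value $1>0$, so $r_2$ is added. The restricted game is now the $2\times2$ block on $\{r_1,r_2\}\times\{c_1,c_2\}$, whose unique equilibrium is $\big((\tfrac12,\tfrac12),(\tfrac12,\tfrac12)\big)$ of value $\tfrac12$. Against $(\tfrac12,\tfrac12,0)$ every row of $U$ pays exactly $\tfrac12$, and when player~1 plays $(\tfrac12,\tfrac12,0)$ the three columns pay $\tfrac12,\tfrac12,1$; hence neither oracle strictly improves, and DO halts with $P_2=\{c_1,c_2\}$ --- a proper subset of the $n=3$ actions --- together with an equilibrium that is a genuine NE of $G$.

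Second I would evaluate $ex_{PES}(G,\{c_1,c_2\})$. The crucial point is that pessimistic exploitability is computed in $G(A_1,\{c_1,c_2\})$, where player~1 is \emph{unrestricted}, which reintroduces the row $r_3$ that DO's internal restricted game never needed. In $G(A_1,\{c_1,c_2\})$ the value is still $\tfrac12$ and player~2's equilibrium strategy is still the unique $(\tfrac12,\tfrac12)$, but player~1's equilibrium strategies now form the whole segment $\{((1-t)/2,(1-t)/2,t):t\in[0,1]\}$, since each of these pays exactly $\tfrac12$ against both $c_1$ and $c_2$. Taking $t=1$ gives the equilibrium strategy $\pi_1=r_3$, which the pessimistic selection may pick; against $r_3$ the now-unrestricted player~2 responds with $c_3$ for payoff $0$, so $u_{PES}(G,\{c_1,c_2\})=0$ (no equilibrium of $G(A_1,\{c_1,c_2\})$ does worse) and $ex_{PES}(G,\{c_1,c_2\})=u^{*}-0=\tfrac12=\tfrac12\Delta$.

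The main obstacle I anticipate is the bookkeeping around DO's run: one has to check that every best response added before termination is a \emph{strict} improvement, so the trajectory is forced and does not depend on tie-breaking, and --- most importantly --- that at the terminal restricted game the outside column $c_3$ yields the minimizer a payoff $1>\tfrac12$, so DO genuinely stops at $\{c_1,c_2\}$ rather than enlarging $P_2$ to the full action set. The secondary, conceptual point to make explicit is the mechanism itself: DO's internal equilibrium is an unexploitable NE of $G$, yet the portfolio metric un-restricts player~1 and thereby admits the degenerate equilibrium $\pi_1=r_3$ --- the same gap between the internal equilibrium and the portfolio-evaluation equilibrium that drives Theorems~\ref{thm:arbitrary_ne} and~\ref{thm:unique_ne}.
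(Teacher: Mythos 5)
Your proof is correct, and it rests on the same underlying mechanism the paper invokes: Double Oracle terminates once the restricted game's equilibrium is an equilibrium of the full game, i.e.\ at a Nash-support portfolio, and such portfolios can be $\tfrac12\Delta$-exploitable under pessimistic selection (the failure mode of Theorems~\ref{thm:arbitrary_ne} and~\ref{thm:unique_ne}). The paper's own argument is a one-line remark --- DO ``may find the support from the counterexamples above'' --- and does not exhibit an actual run, so whether DO really ends with the bad support there depends on a fortunate initialization. Your version goes further: you build a fresh $3\times 3$ game in which, from the natural singleton initialization, every best response added before termination is a strict improvement, so the trajectory is forced, DO provably halts with $P_2=\{c_1,c_2\}$, and the pessimistic evaluation in $G(A_1,\{c_1,c_2\})$ (where the reintroduced row $r_3$ is an equilibrium strategy that $c_3$ exploits) yields $ex_{PES}=\tfrac12=\tfrac12\Delta$; I verified the value $u^*=\tfrac12$, the uniqueness of player~2's equilibrium $(\tfrac12,\tfrac12,0)$, the DO trace, and the exploitability computation. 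The only caveat worth stating explicitly is that your termination step assumes the standard DO stopping rule (stop when neither oracle's best-response value strictly improves on the restricted game value); a variant that also adds tying best responses could insert $r_3$ into $P_1$ and subsequently pull in $c_3$. Under the standard convention your argument is airtight and in fact makes the ``may'' in the observation rigorous in a way the paper's pointer to earlier counterexamples does not.
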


Since DO ends as soon as it finds a support, it may happen that it will find the support from the counterexamples above and return the bad portfolio.

\subsubsection{Incremental Portfolio Generation} Another heuristic builds a portfolio greedily, starting with the best single strategy and gradually expanding it, hoping exploitability decreases with size. This is also not guaranteed to work.

\begin{thm}
\label{thm:incremental}
    There are games in which pure portfolio $P$ of size $k, |P|=k$ has lower exploitability than pure portfolio $P'=P \cup a, |P'| = k+1 $, for any action $a\in A_2\setminus P$.
\end{thm}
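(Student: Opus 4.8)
My plan is to exhibit a single explicit matrix game together with a portfolio $P$ of size $k$ witnessing the claim. I will do it for $k=1$ and remark that larger $k$ follows by padding Player~2 with columns that are strictly dominated for her and including them in $P$ (such columns change neither the restricted games nor the value of the full game). Take $A_1=\{r_1,r_2,r_3\}$, $A_2=\{c_0,d_1,d_2\}$, payoff matrix
$$
U=\begin{bmatrix} 1 & 0 & 0\\ 0 & 1 & -1\\ 0 & -1 & 1\end{bmatrix}
$$
(rows = Player~1, columns = Player~2; an affine rescaling puts all entries in $[0,1]$), and $P=\{c_0\}$, so that $A_2\setminus P=\{d_1,d_2\}$.

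First I would pin down the full game. For $x=(x_1,x_2,x_3)\in\Delta(A_1)$ one has $u(x,c_0)=x_1$, $u(x,d_1)=x_2-x_3$, $u(x,d_2)=x_3-x_2$, hence $\min_{y}u(x,y)=\min\!\bigl(x_1,-|x_2-x_3|\bigr)=-|x_2-x_3|\le 0$, so $u^*=0$ and $r_1$ is an optimal (security) strategy of Player~1. Since column $c_0=(1,0,0)^{\top}$ has a strict maximum, $r_1$ is the \emph{unique} best response to $c_0$, so $G(A_1,\{c_0\})$ has the unique equilibrium $(r_1,c_0)$; Player~1 then plays $r_1$ in $G$ and receives $\min_y u(r_1,y)=0=u^*$. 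Thus $ex_{f_{NE}}(G,\{c_0\})=0$, and since the restricted equilibrium is unique this holds for every selection function $f_{NE}$ (pessimistic, maxent, RM+, optimistic alike).

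Next I would analyse the two one-element extensions. For $P'=\{c_0,d_1\}$, maximising $\min(x_1,\,x_2-x_3)$ forces $x_3=0$ and then $x_1=x_2=\tfrac12$, so the unique Player-1 equilibrium strategy of $G(A_1,\{c_0,d_1\})$ is $x^{(1)}=(\tfrac12,\tfrac12,0)$ (checking Player~2's side, $\min_{(y_0,y_1)}\max(y_0,y_1)=\tfrac12$ at $(\tfrac12,\tfrac12)$, confirms the restricted equilibrium is unique, so again all $f_{NE}$ coincide). But $u(x^{(1)},d_2)=-\tfrac12$, so $u_{f_{NE}}(G,\{c_0,d_1\})=-\tfrac12$ and $ex_{f_{NE}}(G,\{c_0,d_1\})=\tfrac12$. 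By the $r_2\!\leftrightarrow\! r_3,\ d_1\!\leftrightarrow\! d_2$ symmetry of $U$ the same computation gives $ex_{f_{NE}}(G,\{c_0,d_2\})=\tfrac12$. Since $\{d_1,d_2\}$ exhausts $A_2\setminus P$, we conclude $ex_{f_{NE}}(G,P)=0<\tfrac12=ex_{f_{NE}}(G,P\cup\{a\})$ for every $a\in A_2\setminus P$, which is the claim (for all four utility notions at once).

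The main obstacle is the universal quantifier ``for any action $a$'': the off-portfolio columns must be arranged so that whichever one is adjoined, Player~1 is forced to hedge $r_1$ against that column, and the hedge is exactly what the \emph{other} off-portfolio column punishes, all while $r_1$ itself stays globally safe. The symmetric ``rock--paper--scissors''-like block on $\{r_2,r_3\}\times\{d_1,d_2\}$, with $c_0$ serving as the ``trap'' column that pins $r_1$ down inside $P$ but is toothless outside it, does exactly this; the only remaining care is to verify that each relevant restricted game has a unique equilibrium on both sides, so the example is insensitive to the equilibrium-selection rule — each such check is a short linear-programming argument.
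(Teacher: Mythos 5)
Your proof is correct, and it follows the same high-level route as the paper: exhibit an explicit counterexample game in which an optimal portfolio's exploitability strictly increases no matter which action is adjoined. The details differ, though. The paper uses a $2\times 4$ game with utility matrix $\begin{bmatrix}-1 & 1 & -101 & -99\\ 1 & -0.8 & -99 & -101\end{bmatrix}$ and the size-$2$ portfolio $P=\{b_0,b_1\}$ (pessimistic exploitability $0$), showing that adding $b_2$ or $b_3$ collapses Player~1's restricted equilibrium to a pure strategy with exploitability $1$; it only discusses the pessimistic notion. Your $3\times 3$ construction with $P=\{c_0\}$ buys two things the paper's proof does not make explicit: because every restricted game in your example has a unique equilibrium on Player~1's side, the conclusion holds simultaneously for all equilibrium-selection functions (pessimistic, optimistic, maxent, RM+), and your padding remark extends the statement from a fixed $k$ to arbitrary portfolio sizes. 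One small point of care in the padding step: you should state that the padding columns are strictly dominated by $c_0$ itself (e.g., columns of constant value exceeding all entries), so that they are inert in every restricted game that contains $c_0$ and preserve uniqueness of Player~1's restricted equilibrium; "strictly dominated for Player~2" by some strategy outside the portfolio would not immediately suffice. With that clarification the argument is complete and, if anything, slightly stronger than the paper's.
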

Proof can be found in the Appendix~\ref{app:proof_incremental}.
This non-monotonicity means that simple greedy or incremental algorithms are unlikely to guarantee optimality.

\subsection{Game Structure}
\label{subsec:observations}

Finally, beyond the failure of specific heuristics, some games are structured in a way that fundamentally resists approximation by small portfolios.

\begin{thm}
    \label{thm:game_max_expl}
    There are games with $n$ pure strategies for player 2, in which any pure portfolio of size $k < n$ has the maximal possible pessimistic exploitability.
\end{thm}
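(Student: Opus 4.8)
The plan is to exhibit an explicit family of ``anti-coordination'' games in which every small pure portfolio is maximally bad. Concretely, I would take $A_1 = A_2 = [n]$ with payoff matrix $u(i,j) = 1$ for $i \neq j$ and $u(i,i) = 0$. Here the payoff range is $\Delta = 1$ with minimum payoff $0$, the pair of uniform strategies is a Nash equilibrium, and hence the game value is $u^* = \tfrac{n-1}{n}$. Since $u_{PES}(G,P)$ is a value of $u$ it always satisfies $u_{PES}(G,P)\ge 0$, so the exploitability of \emph{any} portfolio in this game is at most $u^*-0 = \tfrac{n-1}{n}$; this is the ``maximal possible'' value, and (note) it tends to $\Delta$ as $n$ grows. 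The goal is therefore to show that every pure portfolio $P\subseteq A_2$ with $|P|=k<n$ attains it, i.e.\ $u_{PES}(G,P)=0$.

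First I would analyze the restricted game $G(A_1,P)$. Because $k<n$, Player~1 has an action $j_0\in A_1\setminus P$, and the pure strategy $j_0$ earns $u(j_0,j)=1$ against every $j\in P$; hence the restricted game value is exactly $1$, the global maximum of the matrix. The key structural step is to characterize which Player~1 strategies appear in a Nash equilibrium of $G(A_1,P)$: a strategy $\pi_1$ is the Player~1 part of some equilibrium if and only if $u(\pi_1,j)=1$ for all $j\in P$. The ``$\ge$'' direction follows from Player~2's best-response condition together with the restricted value being $1$; equality then forces $\pi_1(j)=0$ for all $j\in P$, since $u(\pi_1,j)=1-\pi_1(j)$. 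Conversely, any $\pi_1$ supported on $A_1\setminus P$ yields payoff $1$ against every $\pi_2\in\Delta(P)$ (so Player~1 best-responds, $1$ being maximal) and makes all of Player~2's portfolio actions payoff-equivalent (so every $\pi_2\in\Delta(P)$ best-responds); thus it is an equilibrium.

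Next I would invoke the pessimistic equilibrium selection. Among the Player~1 equilibrium strategies just characterized, pick the point mass on a single $j_0\in A_1\setminus P$. In the unrestricted game, Player~2's best response is $j_0$ itself, giving $u(j_0,j_0)=0$, so $u_{PES}(G,P)=0$. Combining with the game value, $ex_{PES}(G,P)=u^*-0=\tfrac{n-1}{n}$, which matches the a priori exploitability ceiling for this game; hence every pure portfolio of size $k<n$ is maximally exploitable. The argument only used $|P|<n$, so it holds uniformly over all such $P$.

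The main obstacle I anticipate is the second step: because the pessimistic utility ranges over \emph{all} Nash equilibria of the restricted game, one must be careful that no equilibrium pins Player~1 to a less exploitable strategy. This reduces to the clean fact that in this game the restricted value equals the global maximum payoff, which rigidly constrains Player~1's equilibrium strategies to avoid $P$ entirely; everything else is routine. A secondary point worth stating explicitly is the meaning of ``maximal possible'': since $u_{PES}(G,P)\ge u_{\min}$ always, the quantity $u^*-u_{\min}$ is the exploitability ceiling, and it is precisely this ceiling that the construction attains (with the ceiling itself approaching the full payoff range $\Delta$ as $n\to\infty$).
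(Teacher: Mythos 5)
Your proof is correct and is essentially the paper's argument: your game $u(i,j)=1-\mathbf{1}[i=j]$ is just an affine shift of the paper's payoff matrix $-I$, and the core step is identical — any pure portfolio of size $k<n$ misses some column $j_0$, the row $j_0$ is optimal in the restricted game, the pessimistic selection picks it, and the unrestricted opponent punishes by matching, driving the utility to the minimum payoff. Your explicit equilibrium characterization and your clarification that ``maximal possible'' means attaining the ceiling $u^*-u_{\min}=\tfrac{n-1}{n}$ are slightly more careful than the paper's terse ``exploitability 1,'' but the construction and reasoning are the same.
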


\begin{thm}
    \label{thm:game_pes_exploit}
    There are games with $n$ pure strategies for player 2, in which any portfolio of size $k$ has the pessimistic exploitability at least $\frac{n-k}{n k}\Delta$. Where $\Delta$ is the payoff range.
\end{thm}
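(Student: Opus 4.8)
The plan is to produce one ``hard'' game and verify the bound for \emph{every} size-$k$ portfolio in it. I would take the generalized mismatching game $G$ with $A_1=A_2=[n]$ and payoffs $u(i,\ell)=\Delta$ when $i\neq\ell$ and $u(i,\ell)=0$ when $i=\ell$; its payoff range is $\Delta$ and, since uniform-versus-uniform is an equilibrium, $u^\ast=\tfrac{n-1}{n}\Delta$. Fix a portfolio $P=\{p_1,\dots,p_k\}\subseteq\Delta(A_2)$. First I would write out the restricted game $G(A_1,P)$: against $y\in\Delta(P)$, player 1 playing $x\in\Delta(A_1)$ earns $\Delta\bigl(1-\sum_j y_j\langle x,p_j\rangle\bigr)$, so by the minimax theorem her value in $G(A_1,P)$ is $(1-v)\Delta$ with $v:=\min_{x\in\Delta(A_1)}\max_j\langle x,p_j\rangle$, and her set of equilibrium strategies is exactly the polytope $X^\ast=\{x\in\Delta(A_1):\langle x,p_j\rangle\le v\text{ for all }j\}$. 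Against such an $x$ the best response of the \emph{unrestricted} player 2 yields $\Delta(1-\lVert x\rVert_\infty)$, so Definition~\ref{dfn:exploitability} gives $ex_{PES}(G,P)=\Delta\bigl(\max_{x\in X^\ast}\lVert x\rVert_\infty-\tfrac1n\bigr)$.

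This reduces the theorem to a geometric claim: for any distributions $p_1,\dots,p_k$ on $[n]$, the polytope $X^\ast$ contains a point with $\infty$-norm at least $\tfrac1k$. To prove it I would invoke an optimal dual strategy $y^\ast\in\Delta([k])$ for the minimax problem defining $v$, put $q=\sum_j y^\ast_j p_j$ and $S=\{i:q_i=v\}$. Complementary slackness for this matrix game pins every point of $X^\ast$ to be supported on $S$, and the identity $\sum_{j\in\operatorname{supp}(y^\ast)}y^\ast_j\,p_j|_S=v\,\mathbf 1_S$ shows that the normals of the equality constraints active everywhere on $X^\ast$ (the simplex constraint together with $\langle x,p_j\rangle=v$ for $j\in\operatorname{supp}(y^\ast)$) span a subspace of dimension only $\lvert\operatorname{supp}(y^\ast)\rvert$, not one more. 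A rank count at a vertex of $X^\ast$ then forces at least $\lvert S\rvert-k$ coordinates to vanish, so the vertex has at most $k$ nonzero entries and hence $\infty$-norm at least $\tfrac1k$. The degenerate case $v=0$ I would treat directly: there $p_j|_S=0$ for $j\in\operatorname{supp}(y^\ast)$, and a minimal-support point of $X^\ast$ is a single vertex $e_i$. Plugging back, $ex_{PES}(G,P)\ge\Delta(\tfrac1k-\tfrac1n)=\tfrac{n-k}{nk}\Delta$ for every size-$k$ portfolio; the ``block'' portfolio that splits $[n]$ into $k$ equal classes and uses the uniform strategy on each shows the bound is tight.

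The restricted-game computation and the reduction are routine. The crux — and the step I expect to be the main obstacle — is the sparsity bound with the \emph{exact} constant $\tfrac1k$: counting only the simplex equality and the $k$ portfolio inequalities at a vertex naively allows $k+1$ nonzero coordinates and yields the weaker $\tfrac1{k+1}$, so squeezing out the extra factor requires spotting the linear dependence contributed by the optimal dual multiplier, plus some care with the $v=0$ and degenerate-dual corner cases.
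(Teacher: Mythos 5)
Your proposal is correct and follows essentially the same route as the paper: the same hard instance (the identity-type game, up to an affine payoff normalization), the same reduction of the pessimistic bound to the existence of an optimal strategy of the restricted game whose support has size at most $k$ (giving some coordinate at least $\tfrac1k$, against the full game value $\tfrac1n$ below the optimum). The only difference is that the paper simply cites the standard small-support fact for matrix games with $k$ columns, whereas you reprove it via complementary slackness and a vertex rank count (correctly handling the $+1$ from the simplex constraint and the $v=0$ degeneracy), which is sound but not a new approach.
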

The proofs for the above two theorems are in the Appendix~\ref{app:bad_games_proof}.

On the other hand, we show that high rank doesn't indicate high exploitability.
\begin{thm}
\label{thm:k_less_col_rank}

There are games in which a portfolio of a constant size, smaller than the column rank of the utility matrix, has exploitability 0.
\end{thm}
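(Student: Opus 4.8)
The plan is to exhibit, for every $n\ge 3$, a zero-sum matrix game $G_n$ whose $n\times n$ utility matrix has full column rank $n$, together with a portfolio of player~$2$ of size $2$ whose exploitability is $0$. The starting point is a reformulation of the definition of exploitability: since the value $v$ of a zero-sum game equals the maximin payoff of player~$1$, we have $ex_{f_{NE}}(G,P)=0$ exactly when the player-$1$ component $\pi_1$ of the selected equilibrium of $G(A_1,P)$ is a maximin strategy of the unrestricted game $G$. So it suffices to build $G$ and a small $P$ whose restricted game pins player~$1$ to a globally optimal strategy, and then to pad $G$ with extra columns that raise the rank without destroying this property.

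For the construction take $U\in\mathcal{R}^{n\times n}$ with top-left $2\times 2$ block $\left(\begin{smallmatrix}1 & 0\\ 0 & 1\end{smallmatrix}\right)$, with $U_{i,1}=U_{i,2}=0$ for all $i\ge 3$, and, for each column $j\ge 3$, $U_{j,j}=1$ while $U_{i,j}=\tfrac12$ for $i\ne j$. I would first verify the column rank is $n$: columns $1,2$ are the unit vectors $e_1,e_2$, subtracting column $3$ from each of columns $4,\dots,n$ leaves $\tfrac12(e_j-e_3)$, and column $3$ has a nonzero coordinate outside the span of those differences together with $e_1,e_2$. Next, the strategy $x^{*}=(\tfrac12,\tfrac12,0,\dots,0)$ of player~$1$ gives payoff exactly $\tfrac12$ against every pure column of $G$ (against column $j\ge 3$ it is $\tfrac12\cdot\tfrac12+\tfrac12\cdot\tfrac12$), while player~$2$'s strategy $(\tfrac12,\tfrac12,0,\dots,0)$ holds every row of player~$1$ to at most $\tfrac12$ (rows $\ge 3$ get $0$); hence $v=\tfrac12$ and $x^{*}$ is a maximin strategy of $G$.

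It then remains to analyze the restricted game $G(A_1,P)$ with $P=\{1,2\}$: its payoff matrix has columns $e_1,e_2\in\mathcal{R}^n$; rows $3,\dots,n$ are identically $(0,0)$, are strictly dominated by the uniform mix of rows $1$ and $2$, and hence never best responses, so the essential game is the $2\times 2$ matching-pennies-type game $\left(\begin{smallmatrix}1 & 0\\ 0 & 1\end{smallmatrix}\right)$, whose unique equilibrium is $x^{*}=(\tfrac12,\tfrac12,0,\dots,0)$ and $y^{*}=(\tfrac12,\tfrac12)$. Because this equilibrium is unique, every selection rule (pessimistic, optimistic, Maxent, RM+) returns it, and since $x^{*}$ is a maximin strategy of $G$, $u_{f_{NE}}(G,P)=\min_{j\in[n]}\bigl((x^{*})^{\top}U\bigr)_j=\tfrac12=v$, so $ex_{f_{NE}}(G,P)=0$. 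Letting $n$ grow makes the gap between the portfolio size ($2$) and the column rank ($n$) unbounded, which in fact gives a conclusion stronger than the stated theorem.

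The step I expect to demand the most care is the equilibrium analysis of the restricted game: I must exclude spurious equilibria in which player~$1$ places weight on the padding rows $3,\dots,n$ (resolved by the strict-domination argument above, so those rows are never best responses), and I must ensure the padding columns — present only to inflate the rank — do not let player~$1$ guarantee more than $v$ in the full game. The latter is precisely why rows $1,2$ of every padding column are set to $\tfrac12$: it keeps $x^{*}$ equalizing exactly at the value against all columns. Everything else is a direct computation.
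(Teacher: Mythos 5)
Your construction is correct: the matrix has full column rank $n$, the value is $\tfrac12$ with maximin strategy $(\tfrac12,\tfrac12,0,\dots,0)$, and the restricted game on columns $1,2$ has a unique equilibrium whose player-1 component is that maximin strategy, so exploitability is $0$ for every selection rule. This is essentially the same approach as the paper, which also just exhibits an explicit full-rank counterexample — there the simpler block matrix $\bigl(\begin{smallmatrix}-I & \mathbf{0}\\ \mathbf{0} & 1\end{smallmatrix}\bigr)$ of rank $n+1$ with a size-$1$ portfolio (the last column) — so yours only differs in the concrete game used.
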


\begin{proof}
The utility matrix of such a game:
 $$
U = \begin{bmatrix}
-I & \mathbf{0}   \\
\mathbf{0} & 1 \\
\end{bmatrix}
$$
where $I$ is an identity matrix of $n \in \mathcal{N_+}$ columns. Matrix U has the rank of $n+1 = |A_2| $ and a portfolio consisting only last column has exploitability 0.
\end{proof}

Furthermore, this shows that the problem can exhibit "discontinuous" behavior. For example, a game that requires all n strategies in its portfolio (as a game with utility matrix $U=-I$) can be transformed into a trivial game (solvable with a single strategy) by the addition of one new strategy (as the game from the above proof).

This sensitivity means that even games that appear structurally similar can require vastly different portfolios. Together, these findings illustrate the profound difficulty of the optimal portfolio selection and highlight the need for the analytical tools we introduce in the following sections.

\section{Towards the principled heuristics}

The previous section painted a challenging picture: finding an optimal portfolio is computationally hard, and the most intuitive heuristics can lead to arbitrarily poor solutions. This motivates two things: (1) finding alternative analytical tools to reason about portfolio quality, and (2) recognizing the fundamental need for robust empirical evaluation of any proposed heuristic.

\subsection{Constructing Portfolios with Quality Guarantees via $\epsilon$-Dominance}

\subsubsection{$\epsilon$-Dominance and Pure Portfolios }

We propose a method for constructing pure portfolios that come with a formal guarantee on their maximum exploitability. This approach is based on the concept of $\epsilon$-dominance. 

\begin{observation}
    Pure portfolio $P\subseteq A_2$ which $\epsilon$-dominates $A_2 \setminus P$, has an upper bound on pessimistic exploitability $\epsilon$.
\end{observation}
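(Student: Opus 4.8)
The plan is to prove the stronger claim that \emph{every} Nash equilibrium $\pi=(\pi_1,\pi_2)$ of the restricted game $G(A_1,P)$ yields a Player-1 strategy $\pi_1$ that is at most $\epsilon$-exploitable in the full game, i.e. $\min_{\pi_2'\in\Delta(A_2)} u(\pi_1,\pi_2')\ge u^*-\epsilon$ where $u^*=u(\pi^*)$ is the game value. Taking the minimum over $NE(G(A_1,P))$ then gives $u_{PES}(G,P)\ge u^*-\epsilon$, and hence $ex_{PES}(G,P)=u^*-u_{PES}(G,P)\le\epsilon$ by Definition~\ref{dfn:exploitability}.

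First I would record two elementary facts. (i) Let $v_R$ denote the value of $G(A_1,P)$ for Player 1. Since restricting Player 2 shrinks the minimizer's feasible set, $v_R=\max_{\pi_1\in\Delta(A_1)}\min_{\pi_2\in\Delta(P)} u(\pi_1,\pi_2)\ge \max_{\pi_1\in\Delta(A_1)}\min_{\pi_2\in\Delta(A_2)} u(\pi_1,\pi_2)=u^*$, using $\Delta(P)\subseteq\Delta(A_2)$. (ii) If $(\pi_1,\pi_2)\in NE(G(A_1,P))$ then $\pi_1$ is a security (max-min) strategy of the restricted game, so $u(\pi_1,\pi_2')\ge v_R$ for all $\pi_2'\in\Delta(P)$; in particular $u(\pi_1,a)\ge v_R\ge u^*$ for every pure $a\in P$.

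The core step extends this lower bound to the pure actions outside the portfolio. Because the game is zero-sum with $u_2=-u$, the hypothesis that $P$ $\epsilon$-dominates $A_2\setminus P$ unwinds to: for each $a\in A_2\setminus P$ there is $\pi_2^a\in\Delta(P)$ with $-u(\pi_1',\pi_2^a)>-u(\pi_1',a)-\epsilon$, i.e. $u(\pi_1',a)>u(\pi_1',\pi_2^a)-\epsilon$, for all $\pi_1'\in\Delta(A_1)$. Instantiating at $\pi_1'=\pi_1$ and applying fact (ii) to $\pi_2^a\in\Delta(P)$ gives $u(\pi_1,a)>u(\pi_1,\pi_2^a)-\epsilon\ge v_R-\epsilon\ge u^*-\epsilon$. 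Combined with (ii), we get $u(\pi_1,a)\ge u^*-\epsilon$ for \emph{every} pure $a\in A_2$, and by linearity $u(\pi_1,\pi_2')=\sum_{a\in A_2}\pi_2'(a)\,u(\pi_1,a)\ge u^*-\epsilon$ for every $\pi_2'\in\Delta(A_2)$. Minimizing over $\pi_2'$ and then over equilibria of $G(A_1,P)$ closes the argument.

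I do not expect a genuine obstacle; the whole proof is short bookkeeping. The only points requiring care are the sign conventions — Player 2 is the minimizer of $u$, so "$\epsilon$-dominated" must be read through $u_2=-u$ before it can be combined with the security value $v_R$ — and the degenerate case $P=A_2$, where $A_2\setminus P=\emptyset$ is vacuously $\epsilon$-dominated and the bound correctly collapses to $ex_{PES}(G,P)=0$. One may optionally remark that, since the $\epsilon$-dominance inequality is strict, the bound is in fact strict whenever $A_2\setminus P\neq\emptyset$, but the non-strict statement $ex_{PES}(G,P)\le\epsilon$ is all that is asserted.
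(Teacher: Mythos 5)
Your argument is correct: the chain $u(\pi_1,a)\ge v_R\ge u^*$ for $a\in P$, plus the $\epsilon$-dominance inequality instantiated at the restricted-game equilibrium strategy for $a\in A_2\setminus P$, and linearity over $\Delta(A_2)$, is exactly what is needed, and the sign bookkeeping through $u_2=-u$ is handled properly. The paper itself does not prove this observation inline — it invokes the result of Cheng et al.\ (2007) that removing an $\epsilon$-dominated set at once leaves an $\epsilon$-NE of the original game — but your inequality chain is essentially the same argument the paper later sketches (in contrapositive form) for the mixed-portfolio generalization, Theorem~\ref{thm:esp_dom_exploitability}; your direct version is self-contained and equally valid, and in fact already covers the mixed case with no extra work since you never use purity of $P$ beyond $\Delta(P)\subseteq\Delta(A_2)$. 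One small caveat on your optional aside: strictness of the bound cannot hold when $\epsilon=0$ (pessimistic exploitability is always nonnegative, since $u_{PES}(G,P)\le u^*$), so the strict version should be claimed only for $\epsilon>0$ with $A_2\setminus P\neq\emptyset$; the non-strict statement you actually prove is unaffected.
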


\cite{Cheng_Weaker_than_weak_dominance_2007} show that removing the set of $\epsilon$-dominated strategies at once causes the found equilibrium to be $\epsilon$-NE  in the original game. In the context of portfolio selection, it means that if in game $G=(A_1,A_2,u)$ a set $A'_2 \subseteq A_2$ is $\epsilon$-dominated, then exploitability of portfolio $A_2 \setminus A'_2$ is bounded $ex_{PES}(G, A_2 \setminus A'_2)\leq \epsilon$. 

Therefore, we propose a Mixed-Integer Linear Program(MILP) \epsdompure that finds a portfolio $P=A_2\setminus A'_2$ of size $k$, where $A'_2\subseteq A_2, |A'_2|=|A_2|-k$ is an $\epsilon$-dominated set with minimal $\epsilon$. Following MILP finds strategies $l_j \in \Delta(A_2), j\in [|A_2|]$ that $\epsilon$-dominate all other strategies of player 2. Because there always exists pure BR, it's sufficient to enumerate through all pure actions $A_2$, as this will give the highest utility. These strategies are restricted by binary portfolio variables $b$, constraining which actions from $A_2$ can be used. Lastly, $\epsilon \in \langle 0, U_{max} - U_{min} \rangle$ is minimized.

We assume that the utility matrix $U: A_1\times A_2$ is normalized so the values are from $[-1, 1]$. That allows us to use a big enough constant $M=10$. 

\begin{align*}
\min \epsilon \\
\quad \sum_{j=1}^{|A_2|} b_j = k ,\quad \forall_{j \in [|A_2|]}\sum_{h=1}^{|A_2|} l_{jh} = 1 ,\\
\quad \forall_{j,h \in [|A_2|]} l_{jh} \leq b_h, 
\quad \forall_{\substack{ i \in [|A_1|]\\ j \in [|A_2|]}} l_{j} U_{i}^T \leq U_{ij} + \epsilon
\end{align*}

We utilize weak inequality for the $\epsilon$-dominance in the above program, as it is irrelevant when searching for the set with minimal $\epsilon$. 

With \epsdompure, finding the pure portfolio with the upper bound, the natural question that arises is whether it would be worth designing it for mixed portfolios.

\subsubsection{The Benefit of Mixed Portfolio}

For the same size of portfolio, mixed portfolios are more expressive than pure.

\begin{observation}
     For any portfolio utility $u_{f_{NE}}$, any game $G=(A_1,A_2,u)$ and portfolio size $k\leq|N|$, there exists mixed portfolio $P_{m}$, $|P_m|=k$ that's exploitability is at least as low as any pure portfolio $P_p$, $|P_p|=k$:
     $\exists_{P_m \subseteq \Delta(A_2)} \forall_{P_p \subseteq A_2} ex_{f_{NE}}(G, P_m) \leq ex_{f_{NE}}(G,P_p)$.
     \label{obs:mixed_pure}
\end{observation}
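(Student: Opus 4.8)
The plan is to exploit the fact that every pure portfolio is, by definition, a special case of a mixed portfolio, together with the finiteness of the collection of pure portfolios of a fixed size. Concretely, I would identify each action $a \in A_2$ with the degenerate distribution $\delta_a \in \Delta(A_2)$ placing all mass on $a$; under this identification a pure portfolio $P_p = \{a_1,\dots,a_k\} \subseteq A_2$ becomes a mixed portfolio $\{\delta_{a_1},\dots,\delta_{a_k}\} \subseteq \Delta(A_2)$, and since the $a_j$ are distinct so are the $\delta_{a_j}$, hence its cardinality is still $k$.

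First I would check that the pure-to-degenerate-mixed identification preserves the entire exploitability pipeline. Plugging a degenerate distribution into the bilinear formula for $u'$ in the definition of the restricted normal-form game recovers exactly the pure-action payoffs, so $G(A_1,P_p)$ and $G(A_1,\{\delta_{a_1},\dots,\delta_{a_k}\})$ have the same action set for Player 1 and the same payoff matrix, i.e.\ they are the same game. Therefore they have the same set of Nash equilibria, the equilibrium-selection function $f_{NE}$ returns the same profile on both, and the subsequent best response of Player 2 in the full game $G$ yields the same value, so $ex_{f_{NE}}(G, P_p) = ex_{f_{NE}}(G, \{\delta_{a_1},\dots,\delta_{a_k}\})$. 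Viewing a pure portfolio as a mixed one thus leaves its exploitability unchanged.

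Next, since $A_2$ is finite there are only $\binom{|A_2|}{k}$ pure portfolios of size $k$, so $e^\star := \min_{P_p \subseteq A_2,\, |P_p|=k} ex_{f_{NE}}(G, P_p)$ is attained by some pure portfolio $P_p^\star$. I would take $P_m$ to be the mixed portfolio obtained from $P_p^\star$ by the identification above: then $|P_m| = k$, and by the previous step $ex_{f_{NE}}(G, P_m) = e^\star$. For every pure portfolio $P_p$ of size $k$ we have $ex_{f_{NE}}(G, P_m) = e^\star \le ex_{f_{NE}}(G, P_p)$, which is exactly the asserted inequality.

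The argument is short, and the only point I would write out carefully is the verification in the second step that the identification really preserves the whole chain (restricted game $\to$ NE set $\to$ selected equilibrium $\to$ best response in $G$ $\to$ exploitability), since the statement is for an arbitrary $f_{NE}$ and for all of the specific utilities $u_{PES}, u_{ME}, u_{RM+}, u_{OPT}$ introduced earlier — all of which are functions of the restricted game and of $G$ only. An alternative would be to let $P_m$ be a globally exploitability-minimizing mixed portfolio of size $k$ over the compact set $(\Delta(A_2))^k$, but that needs an extra (semi)continuity/compactness argument to guarantee the minimum is attained, whereas the finite-pure-portfolio route avoids this entirely; hence I expect no genuine obstacle here.
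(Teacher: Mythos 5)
Your argument is correct and matches the paper's proof, which simply notes the observation is trivial because mixed portfolios contain pure portfolios; you merely spell out the degenerate-distribution identification and the finite minimization that the paper leaves implicit.
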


\begin{prp}
    There exist games and portfolio sizes for which the above inequality is strict.
     \label{prp:mixed_vs_pure}
\end{prp}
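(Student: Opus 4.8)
The statement is purely existential, so the plan is to exhibit a single explicit matrix game $G$ and one portfolio size $k$ for which the least exploitable mixed portfolio of player 2 strictly beats every pure portfolio of that size; taking $k=1$ will suffice, and the construction I have in mind in fact achieves the separation for every equilibrium-selection function $f_{NE}$ simultaneously, including the pessimistic one.

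The game I would use stacks the $3\times 3$ rock--paper--scissors matrix on top of an all-zero row,
$$U=\begin{bmatrix}1&-1&-1\\-1&1&-1\\-1&-1&1\\0&0&0\end{bmatrix},$$
with player 1 the maximizing row player and player 2 the player restricted by the portfolio. The first step is to record the game value $u^*=0$: row $4$ guarantees player 1 at least $0$, while player 2's uniform column mixture $\sigma=(\tfrac13,\tfrac13,\tfrac13)$ holds her to $0$, since each of rows $1$--$3$ then earns $-\tfrac13$.

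The second step is the pure-portfolio analysis: there are exactly three portfolios of size $1$, one per column, and by symmetry it is enough to examine $P_p=\{\text{column }j\}$. In $G(A_1,P_p)$ player 1's unique best response is row $j$ (payoff $1$, versus $-1$ or $0$ for every other row), so the restricted game has a unique player-1 equilibrium strategy and every $f_{NE}$ selects row $j$; player 2's best response to row $j$ in $G$ is an off-diagonal column worth $-1$, hence $ex_{f_{NE}}(G,P_p)=0-(-1)=1$, and the minimum over pure size-$1$ portfolios equals $1$. The third step checks the mixed portfolio $P_m=\{\sigma\}$: against $\sigma$, rows $1$--$3$ earn $-\tfrac13$ while row $4$ earns $0$, so player 1's unique best response in $G(A_1,P_m)$ is row $4$, again independently of $f_{NE}$; playing row $4$ in $G$ yields $0$ against any response, so $u_{f_{NE}}(G,P_m)=0$ and $ex_{f_{NE}}(G,P_m)=0$. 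Since exploitability is always nonnegative this is optimal, and $0<1$ delivers the strict inequality for every $f_{NE}$.

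The delicate part of the construction--and the reason both the zero row and the full RPS block are needed--is arranging two ``unique best response'' properties at once: every pure column must elicit a row whose minimum entry is low, while the uniform mixture must elicit the safe row $4$, which no single column can elicit. Uniqueness of player 1's response in each restricted game is exactly what prevents an adversarial (e.g. pessimistic) $f_{NE}$ from closing the gap, so the one point requiring attention is keeping the relevant row-payoff inequalities strict; that is easy here. If one wants the payoff range normalized one simply rescales $U$ into $[-1,1]$, which changes nothing qualitatively, and I would note in passing that the same phenomenon already appears in a $3\times 3$ game, e.g. replacing rows $2$ and $3$ by $(1,1,-3)$ and $(-3,1,1)$, whose positive entries still jointly cover all three columns while both row sums remain negative.
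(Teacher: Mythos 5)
Your proof is correct, but it takes a genuinely different route from the paper's. The paper also argues by counterexample, but uses plain Rock--Paper--Scissors with portfolio size $k=2$: the optimal pure portfolio (any two distinct actions, e.g.\ $\{R,P\}$) forces player 1 to the restricted-game equilibrium $(0,\tfrac23,\tfrac13)$ and yields exploitability $\tfrac23$ under both optimistic and pessimistic selection, while the mixed portfolio $\{(\tfrac12,\tfrac12,0),(0,\tfrac12,\tfrac12)\}$ yields $\tfrac13$; since the optimistic and pessimistic utilities sandwich every other $u_{f_{NE}}$, the strict gap transfers to all selection functions. Your construction instead augments the RPS block with a safe all-zero row and separates already at $k=1$: each pure single-column portfolio elicits a unique risky best response (exploitability $1$), while the uniform mixed strategy elicits the safe row and attains exploitability $0$. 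Your version buys a stronger and arguably cleaner separation --- the mixed portfolio is exactly optimal, the argument covers every $f_{NE}$ directly because the best responses in the restricted games are unique (no appeal to the OPT/PES bounds is needed), and the gap appears at the smallest possible portfolio size; the paper's version buys an unmodified standard benchmark game and a smaller $3\times 3$ matrix that ties in with its other examples. Your computations check out (game value $0$, every pure size-$1$ portfolio gives $1$, the uniform mixture gives $0$), and the closing $3\times 3$ aside is also essentially fine, noting only that the middle column there admits tied best responses, which still leaves its exploitability strictly positive.
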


We provide the proofs for Observation \ref{obs:mixed_pure} and Proposition~\ref{prp:mixed_vs_pure} in Appendix~\ref{app:proof_mixed_vs_pure}. 

\subsubsection{$\epsilon$-Dominance and Mixed Portfolio}
Following on the fact that using mixed strategies in the portfolio can significantly lower the exploitability, we expand the use of the concept of $\epsilon$-dominance to mixed portfolios. We further present a theorem that allows us to assume the same upper bound for mixed portfolios. 

\begin{thm}
If a portfolio $\epsilon$-dominates all strategies of player 2 (i.e., for any pure strategy $j$ of player 2 and any strategy $x$ of player 1, a linear combination of strategies in the portfolio is at most $\epsilon$ worse against $x$ than $j$), the pessimistic exploitability of the portfolio is less than $\epsilon$.
\label{thm:esp_dom_exploitability}
\end{thm}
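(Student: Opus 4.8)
The plan is to mirror the argument that was already sketched for pure portfolios (Observation on $\epsilon$-dominance and the result of \citet{Cheng_Weaker_than_weak_dominance_2007}), but to carry it out directly in terms of strategies rather than via iterated elimination, so that it applies when the portfolio $P\subseteq\Delta(A_2)$ may contain mixed strategies. Let $\pi=(\pi_1,\pi_2)$ be any Nash equilibrium of the restricted game $G(A_1,P)$, and let $v_R=u'(\pi_1,\pi_2)$ be its value. The key object is $\pi_1\in\Delta(A_1)$ viewed as a strategy in the original game $G$: I want to bound $\min_{\pi_2'\in\Delta(A_2)}u(\pi_1,\pi_2')$ from below by $v_R-\epsilon$, and separately bound $v_R$ from below by $u^*-$ (something small), so that together $ex_{PES}(G,P)=u^*-u_{PES}(G,P)<\epsilon$.

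The first step is the core inequality. Fix any pure $j\in A_2$. By the $\epsilon$-dominance hypothesis there is a convex combination $\pi^j\in\Delta(P)$ — equivalently a strategy in $\Delta(A_2)$ supported on $\bigcup_{z}\mathrm{supp}(P(z))$ — with $u(x,\pi^j)\le u(x,j)+\epsilon$ for every $x\in\Delta(A_1)$; in particular $u(\pi_1,j)\ge u(\pi_1,\pi^j)-\epsilon$. Now $\pi^j$ corresponds to a mixed strategy over $P$, hence is a legal strategy for player 2 in $G(A_1,P)$, so since $\pi_1$ is player 1's equilibrium strategy there (a best response is being played against it, and $\pi_2$ guarantees value $v_R$ against $\pi_1$ but more to the point $\pi_1$ secures \emph{at least} $v_R$ against every response in $\Delta(P)$), we get $u'(\pi_1,\pi^j)\ge v_R$, i.e. $u(\pi_1,\pi^j)\ge v_R$. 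Combining, $u(\pi_1,j)\ge v_R-\epsilon$ for every pure $j$, and since a best response can be taken pure, $\min_{\pi_2'\in\Delta(A_2)}u(\pi_1,\pi_2')\ge v_R-\epsilon$. This is exactly $u_{PES}(G,P)\ge v_R-\epsilon$ once we note the argument holds for the pessimistic equilibrium choice of $\pi$.

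The second step is to relate $v_R$ to the original game value $u^*$. Restricting player 2 to $P\subseteq\Delta(A_2)$ can only help player 1, so $v_R\ge u^*$; indeed player 1 retains all of $\Delta(A_1)$ while player 2 loses options. Hence $u_{PES}(G,P)\ge v_R-\epsilon\ge u^*-\epsilon$, which gives $ex_{PES}(G,P)=u^*-u_{PES}(G,P)\le\epsilon$. To obtain the strict inequality claimed in the statement, I would invoke the strictness in the definition of $\epsilon$-dominance (the paper uses $u_i(\pi_i,\pi_{-i})>u_i(a_i,\pi_{-i})-\epsilon$ with strict $>$): for each $j$ the domination is strict, and since there are finitely many pure $j$ one can pass to a uniform slack $\epsilon'<\epsilon$ with $u(x,\pi^j)\le u(x,j)+\epsilon'$ for all $x$, yielding $ex_{PES}(G,P)\le\epsilon'<\epsilon$.

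\textbf{Main obstacle.} The delicate point is the claim $u'(\pi_1,\pi^j)\ge v_R$: I must be careful that $\pi^j\in\Delta(\mathrm{supp}\,P)$ really is representable as a mixture over the $k$ portfolio strategies $P(1),\dots,P(k)$ (it is, by construction, since $\epsilon$-dominance provides the dominating strategy \emph{as} a mixture of portfolio elements), and that "$\pi_1$ is a Nash strategy of $G(A_1,P)$" genuinely implies $\pi_1$ secures at least the game value $v_R$ against \emph{every} $\pi_2'\in\Delta(P)$, not merely against the equilibrium $\pi_2$ — this is the standard minimax/security-level fact for 2p0s games, but it is the one place the argument leans on zero-sum structure, so I would state it explicitly. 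Everything else is routine convexity and the finiteness of $A_2$ (used to take best responses pure and to uniformize the strict slack).
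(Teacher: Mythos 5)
Your proposal is correct and follows essentially the same argument as the paper's proof: you use the $\epsilon$-dominating mixture $\pi^j\in\Delta(P)$ together with the fact that the equilibrium strategy $\pi_1$ of the restricted game secures at least $v_R\geq u^*$ against every response in $\Delta(P)$, exactly the two inequalities the paper chains (merely phrased directly rather than by contradiction). Your explicit treatment of $v_R\geq u^*$ and of the strict-inequality slack is a welcome tightening of details the paper's sketch leaves implicit.
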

\textbf{Proof sketch:}
Let $\bar{x}$ be a Nash equilibrium strategy for player 1 in the game restricted by the portfolio with game value $\bar{v}$. For contradiction, assume there is a strategy $j$ of player 2 in the full game, which gains a value $u(\bar{x},j) < \bar{v}-\epsilon$. If this strategy is not included in the portfolio, it must be $\epsilon$-dominated by the portfolio. It means there is a linear combination of strategies in the portfolio $p$, such that for any $x$ (including $\bar{x}$), $u(\bar{x},j) + \epsilon \geq u(\bar{x},p) \geq \bar{v}$. The last inequality is from $\bar{x}$ being the equilibrium of the restricted game. This means that $u(\bar{x},j) \geq \bar{v} - \epsilon$, which is a contradiction.$\qed$

Based on this result, we can approximate the problem of finding the best possible portfolio of size k by searching for a set of k mixed strategies that minimizes the 
epsilon required to dominate all other strategies. We solve this using the following Mixed-Integer Linear Program, which we call \epsdommixed. It minimises $\epsilon \in \langle 0, U_{max} - U_{min} \rangle$ so that each of mixed strategies $l_z \in \Delta(A_2), z \in [k]$,  can be used to $\epsilon$-dominate any strategy $\pi_2 \in \Delta(A_2)$. For each strategy of player 2, only one of the strategies $l_z$ can be used. That is controlled by binary variables $d_{zj}$. 

\begin{align*}
\forall_{z \in [k]}\sum_{j=1}^{|A_2|} l_{zj} = 1, \quad \forall_{j \in [|A_2|]}\sum_{z=1}^{k} d_{zj} = 1,\\
\quad \forall_{\substack{z \in [k]\\ i \in [|A_1|] \\ j \in [|A_2|] }} l_{z} U_{i}^T \leq U_{ij} + \epsilon + M\cdot (1- d_{zj})\\
\min \epsilon
\end{align*}

\begin{prp}
The exploitability of the portfolio found by \epsdommixed,  $ex_{PES}(P)$, is bound by the $\epsilon$.
\end{prp}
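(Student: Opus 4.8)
The plan is to observe that \epsdommixed is, by construction, an optimizer that directly certifies the hypothesis of Theorem~\ref{thm:esp_dom_exploitability}, so that the proposition falls out as a corollary. Let $(l,d,\epsilon)$ be a feasible solution (in particular, the optimal one) returned by the program and put $P=\{l_1,\dots,l_k\}$. The constraints $\forall_{z}\sum_{j}l_{zj}=1$ together with the non-negativity of the program variables make each $l_z$ a genuine mixed strategy in $\Delta(A_2)$, so $P$ is a mixed portfolio of size $k$, and the next step is to show it $\epsilon$-dominates every strategy of player~2.

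First I would show that the big-$M$ block enforces exactly pointwise $\epsilon$-domination. Fix a pure strategy $j$ of player~2. The constraint $\sum_{z}d_{zj}=1$ selects a unique index $z(j)$ with $d_{z(j)j}=1$, and for that index the corresponding constraint reduces to $l_{z(j)}U_i^T\le U_{ij}+\epsilon$ for every $i\in[|A_1|]$, i.e.\ $u(e_i,l_{z(j)})\le u(e_i,e_j)+\epsilon$, where $e_i$ denotes pure action $i$ of player~1. For the indices $z$ with $d_{zj}=0$ the constraint is vacuous: since $U$ is normalized to $[-1,1]$ we have $l_z U_i^T\le 1$ whereas $U_{ij}+\epsilon+M\ge -1+0+10=9$, confirming $M=10$ is large enough. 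By linearity of $u$ in player~1's strategy, $u(x,l_{z(j)})\le u(x,j)+\epsilon$ then holds for every $x\in\Delta(A_1)$. Hence, taking the trivial linear combination that puts weight $1$ on $l_{z(j)}$, the portfolio $P$ $\epsilon$-dominates every pure strategy $j$ of player~2 — which is precisely the hypothesis of Theorem~\ref{thm:esp_dom_exploitability}. (If one prefers the ``all strategies'' phrasing, extend this to any $\pi_2\in\Delta(A_2)$ via $p=\sum_j\pi_2(j)\,l_{z(j)}$, a convex combination of portfolio members, and apply linearity again to get $u(x,p)\le u(x,\pi_2)+\epsilon$.)

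Applying Theorem~\ref{thm:esp_dom_exploitability} then gives $ex_{PES}(G,P)\le\epsilon$ — in fact strictly less than $\epsilon$ — which is the claim. The only point needing care, rather than a genuine obstacle, is that \epsdommixed uses weak inequalities in the $\epsilon$-domination constraints while the definition of $\epsilon$-dominance and the statement of Theorem~\ref{thm:esp_dom_exploitability} are phrased with strict ones; but the proof of that theorem only uses the weak inequality $u(\bar{x},j)+\epsilon\ge u(\bar{x},p)$, so the weak formulation suffices, exactly as already noted after \epsdompure. I would also record that the program is always feasible — pick any $k$ actions of $A_2$ as the supports of the $l_z$ and set $\epsilon=U_{max}-U_{min}$ — so the returned $\epsilon$ is well defined, and the substantive work has already been carried out in Theorem~\ref{thm:esp_dom_exploitability}.
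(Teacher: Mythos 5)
Your proof is correct, but it takes a different route than the paper's own appendix proof. You verify that the big-$M$ constraints of \epsdommixed force, for each pure strategy $j$ of player~2, the selected portfolio member $l_{z(j)}$ to satisfy $l_{z(j)}U_i^T\le U_{ij}+\epsilon$ for all $i$, extend this by linearity to all of $\Delta(A_1)$ (and to all mixed strategies of player~2), and then invoke Theorem~\ref{thm:esp_dom_exploitability}, whose contradiction argument at an equilibrium $\bar{x}$ of the restricted game delivers the bound directly. The paper instead argues via the augmented game $G(A_1, A_2\cup P)$: it treats the portfolio as added columns, applies the result of \citet{Cheng_Weaker_than_weak_dominance_2007} that removing an $\epsilon$-dominated set at once leaves any equilibrium of the reduced game an $\epsilon$-NE of the larger game, and then maps that $\epsilon$-NE back into $G$ using the fact that every portfolio strategy lies in $\Delta(A_2)$. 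Your route is more self-contained (it leans only on the paper's own theorem) and adds a useful detail the paper leaves implicit, namely the explicit check that $M=10$ makes the non-selected constraints vacuous and that feasibility always holds; the paper's route buys a cleaner conceptual link to the elimination literature without re-running the contradiction argument. One small slip: with the weak inequalities actually used in the MILP, the argument yields $ex_{PES}(G,P)\le\epsilon$, not strictly less than $\epsilon$ as your parenthetical claims; since the proposition only asserts the weak bound, this does not affect correctness.
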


\noindent
Proof of the proposition can be found in the Appendix~\ref{app:proof_eps_dom_mixed_milp}.

\subsection{Pessimistic Portfolio Evaluation} 
\label{sec:alg_eval}

The lack of theoretical guarantees for any tractable heuristic means we must rely on well-designed experiments to compare and validate different approaches. Consequently, we present \evaluation program for the pessimistic exploitability evaluation of (mixed) portfolio $P: k \times |A_2|$  in the 2p0s game  $G=(A_1, A_2, u)$. 

For this program value $v_R$ of the restricted game $G(A_1, P)$ needs to be computed first. Additionally, we calculate utility matrix $U_R: |A_1| \times k$ of $G(A_1,P)$ as $U_R = P U$. By $e_i \in \{0, 1\}^{|A_2|}$ we denote a vector with 1 on position $i$ and 0 otherwise. Variables $x \in [0,1]^{|A_1|}$ denote player 1's strategy and $v_o \in [min(U), max(U)]$ is the utility of portfolio. Lastly, binary variables $b \in \{0, 1\}^{|A_2|}$ signify which action is best response of player 2. We assume that the utility matrix $U: A_1\times A_2$ is normalized so the values are from $[-1, 1]$. That allows us to set a constant $M=10$.
In the following MILP, we make sure that the strategy is a distribution (1) and that there is only one best response picked (2). We also fix player 1's strategy to be NE (3) and bound best response value $v_o$ (4). 
 \[ 1) \sum_{i \in \{1,.., |A_1| \}} x_i = 1, \quad 2) \sum_{i \in \{1,.., |A_2| \}} b_i = 1,\] \[\quad 3) \;x U_R \geq v_r  \mathbbm{1}  \]
 \[ 4)\; \forall_{ i \in \{1,...,|A_2|\}} x U e_i \leq v_o + M (1 - b_i)    \]
 \begin{align*}     
     \min v_o
 \end{align*}

The exploitability of a portfolio is computed as $| v - v_{o}|$, where $v$ is the game value.

\section{Experiments}
\label{sec:experiments}

Our theoretical investigation established that finding optimal portfolios is NP-hard and that simple heuristics are unreliable. In response, we developed principled tools: \epsdompure and \epsdommixed to construct portfolios with a provable quality bound, and \evaluation to precisely measure their exploitability. In this section, we empirically validate these tools and demonstrate their practical effectiveness. Our experiments are designed to answer three key questions:
\begin{enumerate}
    \item How tight is the relationship between the theoretical $\epsilon$-dominance bound and actual portfolio exploitability?

    \item  What is the significance of finding the $\epsilon$-dominating set globally, and what is the difference in practice between \epsdommixed and \epsdompure?
    
    \item How do our methods perform against common and state-of-the-art heuristics on well-known benchmark games?
\end{enumerate}
All experiments were run on a CPU cluster. The code is publicly available at \texttt{zip supplementary materials}.

\textbf{Games domain}
In our experiments, we use two game domains: random games for robust statistical analysis and normal-form representations of well-known benchmark games (Goofspiel, Blotto, Kuhn Poker) to assess performance on games with meaningful strategic structure.

Random games~\citep{random_games_ling2019largescalelearningagent} enable easy generation of games at any size. We create them by sampling utility integer values from $[-1e07, 1e07]$ and normalizing them to the range [-1, 1].

We transformed three extensive form games to the normal form using Openspiel~\cite{lanctot2020openspielframeworkreinforcementlearning} and PyGambit~\cite{gambit2025}. Firstly, Goofspiel 3, is the imperfect information game, where players bet on point cards with one of their 3 cards. The point cards are dealt in descending order (3,2,1) and players only know if they won, lost, or drew in the round(not the opponent's move). In the end, reward 1 is given to the player who won the most points. This game contains two pure equilibria (one of them is playing cards 3, 2, 1 in each round).
Blotto is a game parametrized by the number of fields and coins (sometimes called "troops"). Players redistribute their coins along the fields in order to have more coins in more fields than their opponent. With 3 fields and at least 6 coins, the game is not trivial and does not have a pure equilibrium. 
Kuhn Poker, a toy variant of Poker, in the original extensive form has sequential moves and a chance player responsible for distributing the cards.

\subsection{Exploitability and $\epsilon$-dominance}
\label{subsec:epsilon_exploitability}
As \epsdompure is built to minimize the upper bound on the exploitability of the portfolio, it was unclear how tight the two values are. The result of the experiment testing this connection is on Figure~\ref{fig:exploitability_vs_epsilon}. Not only are the two values correlated, but the exploitability is more than twice as low. That makes this method good for portfolio search. Additionally, in the Figure~\ref{subfig:size_vs_epsilon} we show that the size of the portfolio with respect to $\epsilon$ is high for small values but relatively quickly drops down, which is useful as a portfolio of a smaller size is more desirable.
In this experiment, we modified \epsdompure to instead minimize the size of the portfolio for a given constant value of $\epsilon$. Then such a program was run on $100$ random games of sizes $5,10$ and $15$.

\begin{figure}[h]
     \centering
     \begin{subfigure}[b]{0.234\textwidth}
         \centering
         \includegraphics[width=\textwidth]{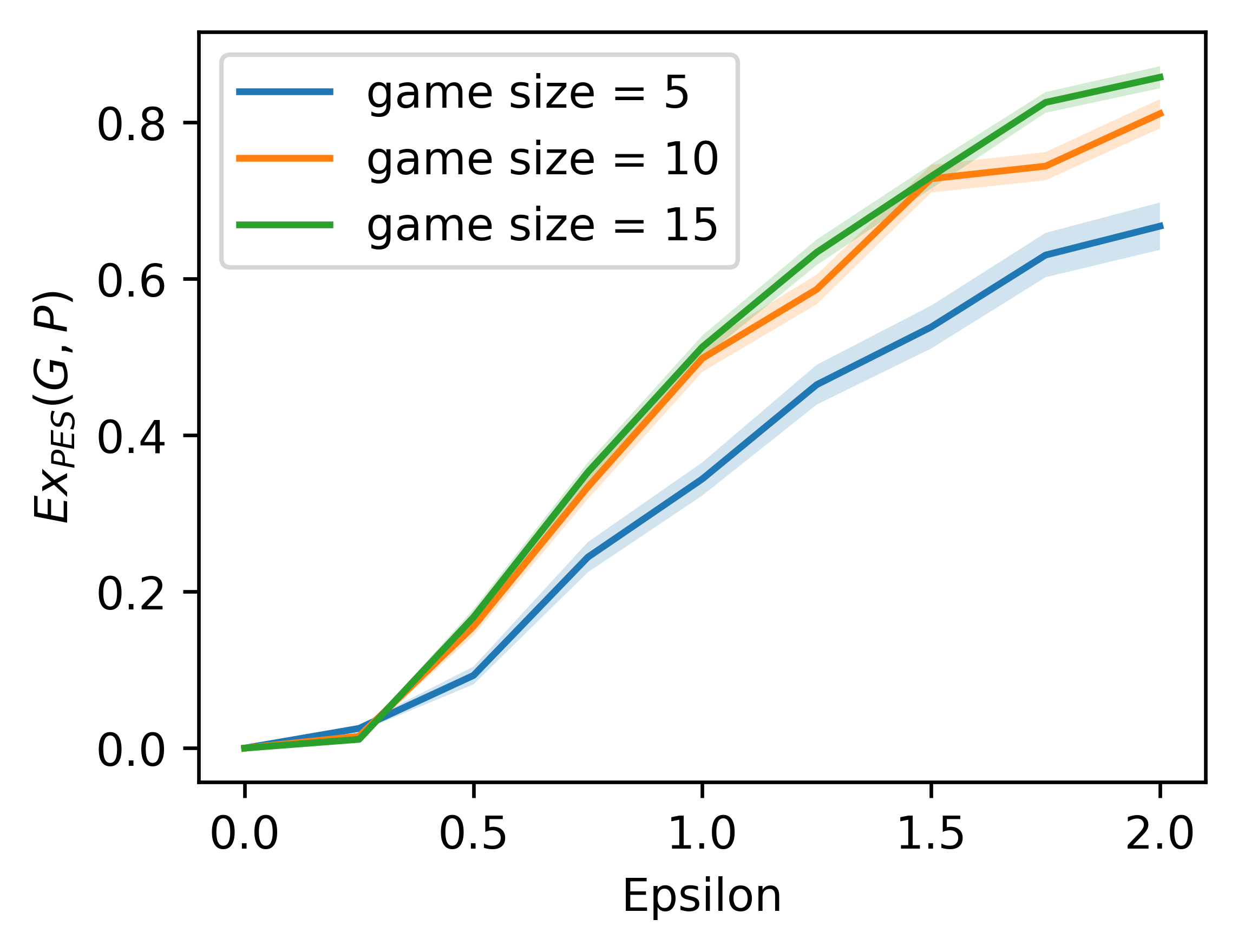}
        \caption{Pessimistic exploitability of portfolio with bounded epsilon}
        \label{subfig:exploitability_vs_epsilon}
     \end{subfigure}
     \begin{subfigure}[b]{0.234\textwidth}
         \centering
         \includegraphics[width=\textwidth]{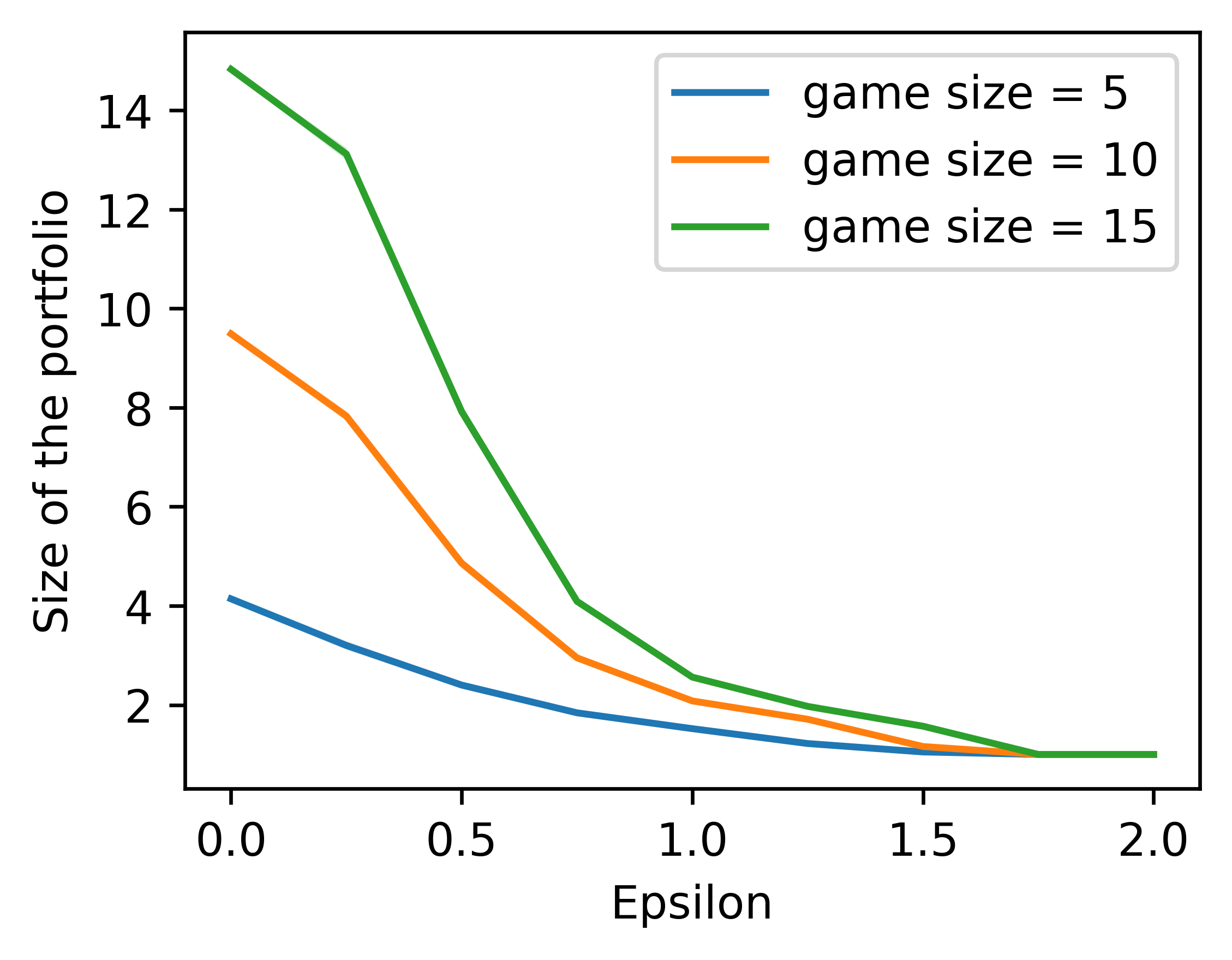}
        \caption{Size of the portfolio that is $\epsilon$-dominating set}
        \label{subfig:size_vs_epsilon}
     \end{subfigure}
     \caption{Experiment showing relation of pessimistic exploitability $ex_{PES}$ and size of the portfolio with different $\epsilon$ values. Portfolios were found by \epsdompure, with a changed objective to minimize its portfolio size for the given $\epsilon$ bound. Performed on $N=100$ random games, the mean taken and the standard error are shown by the shaded area. }
    \label{fig:exploitability_vs_epsilon}
\end{figure}

\subsection{Evaluation of \epsdompure and \epsdommixed }
\label{subsec:eps_dom_vs_greedyk}

\begin{figure}[h]
     \centering
     \includegraphics[width=0.24\textwidth]{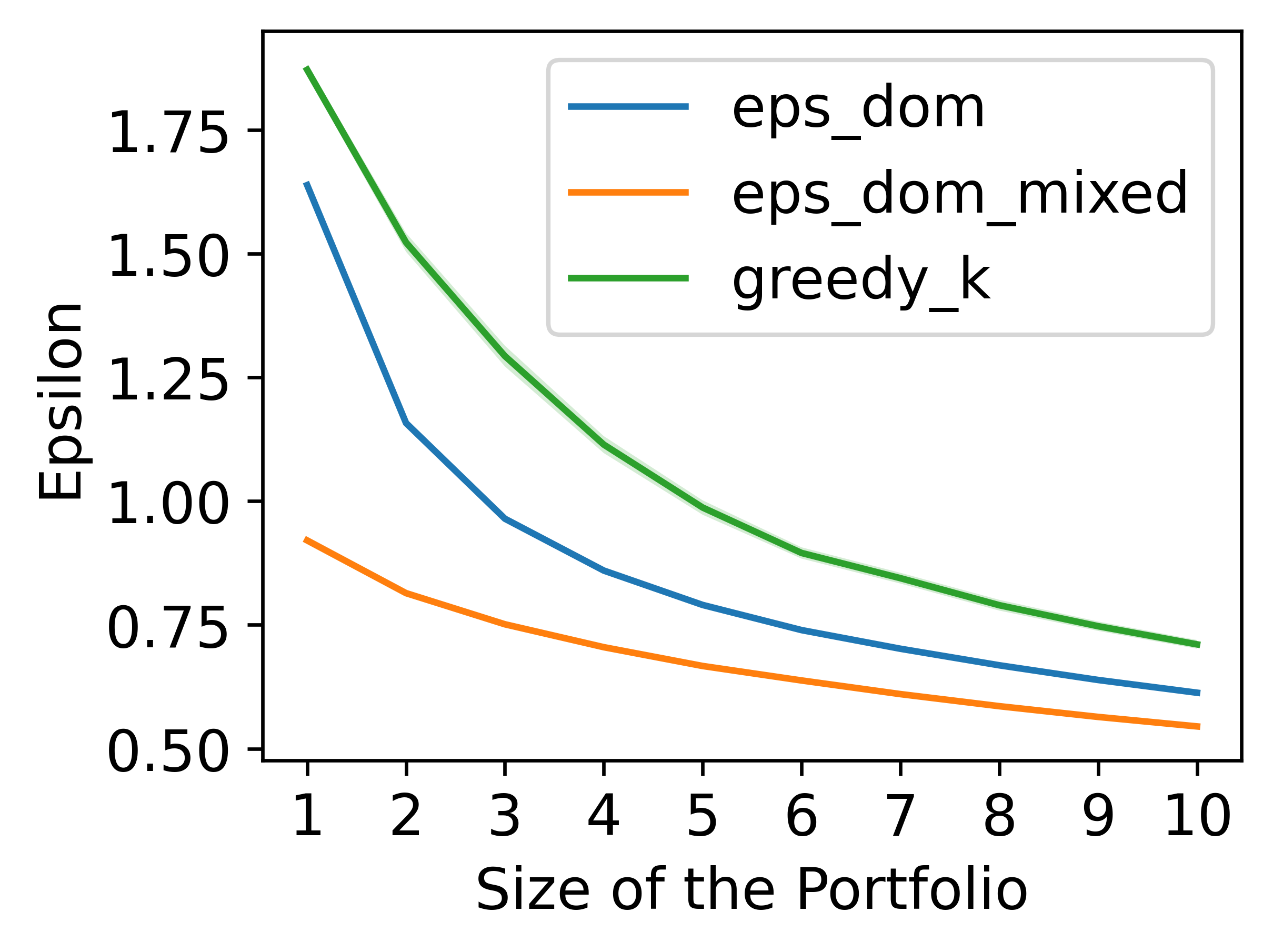}
 \caption{ Comparison of \epsdompure, \epsdommixed and Greedy-K on games of action sizes $|A_1|=|A_2|=25$. The experiment was run on $100$ different random games, the mean and the standard error are shown by the shaded area.}
\label{fig:eps_dom_vs_greedy_k}

\end{figure}

We evaluate the effectiveness of \epsdompure and \epsdommixed in identifying a subset of actions that are $\epsilon$-dominated with a minimal $\epsilon$ and compare them against the Greedy-K algorithm proposed by~\citet{Cheng_Weaker_than_weak_dominance_2007}. It's important to note that the Greedy-K algorithm was originally designed for computing iterated $\epsilon$-dominated Equilibria, a distinct objective from our portfolio selection problem. Given a budget for all eliminations, in each iteration set of $K$ strategies is removed. The set is chosen greedily by selecting actions which individually have the lowest $\epsilon$, then the joint $\epsilon$ for the set is computed with a linear program. This part is the part we will compare to, restricting Greedy-K to only remove $|K|=|A_2|-k$ strategies of player 2, disregarding the budget. We then compare the $\epsilon$ of the set $K$ with the $\epsilon$ of the sets $ A_2\setminus P$ that \epsdompure and \epsdommixed return. This modification allows for a direct comparison between the performance of a greedy selection based on $\epsilon$ values and the simultaneous selection achieved by our MILP-based \epsdompure approach.

\begin{figure*}[hbt!]
\includegraphics[width=15cm]{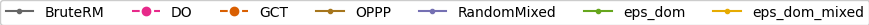}
     \centering
          \begin{subfigure}[b]{0.3\textwidth}         \includegraphics[width=5cm]{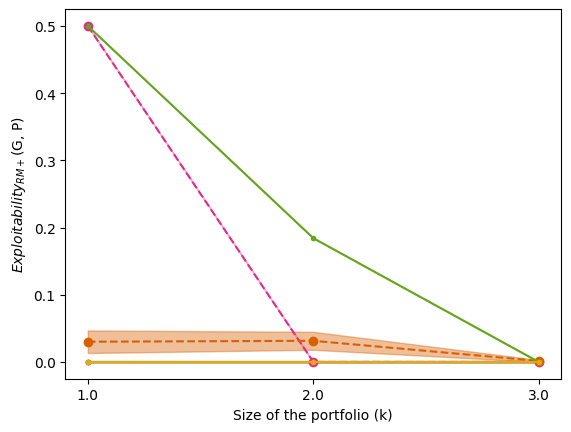}
         \caption{Goofspiel with 3 cards}
         \label{fig:efg_goofspiel}
    \end{subfigure}
     \begin{subfigure}[b]{0.3\textwidth}
         \includegraphics[width=5cm]{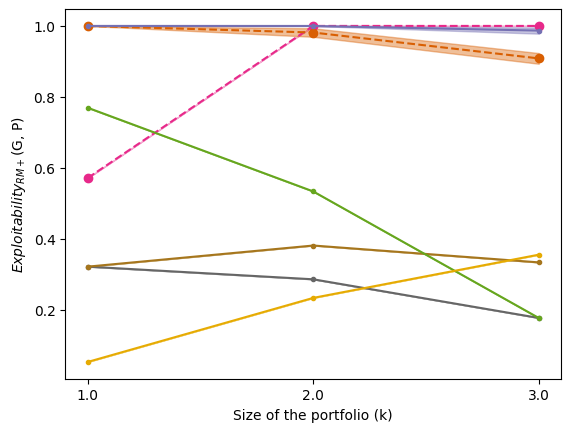}
         \caption{Blotto with 3 fields  and 8 coins}
         \label{fig:efg_blotto}
    \end{subfigure}
    \begin{subfigure}[b]{0.3\textwidth}
         \includegraphics[width=5cm]{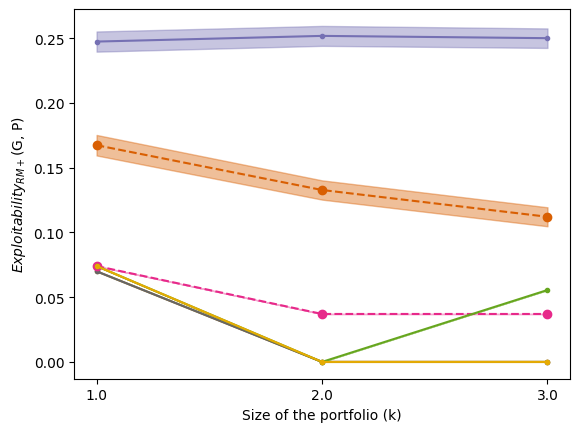}
         \caption{Kuhn Poker}
         \label{fig:efg_kuhn}
    \end{subfigure}
    \caption{RM+ Exploitability of different methods on extensive form games. Online algorithms are marked with a dashed line. For methods that depend on stochasticity, results for 50 different seeds were computed. The average and standard error is shown. GCT method uses a scaling factor of 0.3 and gradients from 10k iterations. }
    \label{fig:all_methods_efg}
\end{figure*}

In the Figure~\ref{fig:eps_dom_vs_greedy_k}, it is visible that global optimizing for a minimal $\epsilon$-dominating set gives significantly better results, as our algorithms outperform Greedy-K in this setup. Additionally, we see that the gap between the algorithms narrows while portfolio size increases, as expected, since the number of selected strategies to be removed is smaller. The difference in performance between \epsdompure and \epsdommixed further highlights the potential gains from considering mixed strategies.

While the plots for other game sizes (15 and 20) are included in Appendix~\ref{app:eval_eps_dom_greedy}, the observed trends remain consistent across all tested dimensions.

\subsection{Evaluation on Real Games}
\label{subsec:results_efg}

Finally, we evaluate our methods against a suite of established heuristics on standard benchmark games. We compare our methods,  \epsdompure and \epsdommixed, with the following baselines:

\begin{itemize}
    \item Optimal Pure Pessimistic Portfolio (OPPP) and Optimal Pure RM+ Portfolio (BruteRM): The ground-truth best pure portfolio for given exploitability, found via brute-force search.
    \item Double Oracle (DO)~\citep{mcmahan2003planning} method. Each player's strategy subset is iteratively expanded by adding best responses to the opponent's current actions. We initialize it with responses to the uniform strategy and continue until player 2's subset reaches the target portfolio size.
    \item Gradient Cluster Transformations (GCT)~\citep{kubicek-sepot-2024}, this online, scalable method constructs a portfolio in three steps. First, it calculates gradients from an algorithm converging towards Nash Equilibrium (though not necessarily reaching it). Second, these gradients are normalized and grouped into $k$ clusters, identifying key directions in the strategy space. Finally, a gradient is sampled from each cluster's center, scaled by a factor c, and combined with a base strategy (e.g., an approximate NE) to generate the portfolio. We adopt GCT to our 2p0s games setting and, following the original work, employ Regularized Nash Dynamics (R-NaD)~\citep{perolat2022mastering,perolat_poincare_2020} to compute the gradients.
    \item Random Mixed: A baseline that generates random mixed-strategy portfolios. Due to drawing in the simplex, the strategies tend to be close to uniform strategies.
\end{itemize}

Figure~\ref{fig:all_methods_efg} shows the performance of all methods across the three games. Our \epsdommixed method is exceptionally strong, consistently performing as well as or better than the optimal pure portfolio benchmark (OPPP). This is a significant result, as our method is more scalable than brute-force search.

The game-specific results are also insightful. In Goofspiel(Subfigure \ref{fig:efg_goofspiel}), for instance, the optimal portfolio consists of strategies that are not part of a Nash Equilibrium. This is exactly the case when DO fails due to its reliance on NE support. Moreover, it poses a challenge for \epsdompure, which relies on dominance relationships where multiple actions have the same $\epsilon=1$. However, \epsdommixed overcomes this easily by mixing strategies. This highlights the robustness of our mixed-strategy approach. Across all games, our principled methods are highly competitive, demonstrating their value as a practical tool for portfolio construction.

Blotto (Subfigure \ref{fig:efg_blotto}) turned out to be a game in which the GCT and DO methods struggle. Since both of the methods rely on the NE support and its search, the cause of this is likely to be a big size of the support of NE. Additionally, \epsdommixed achieves very good results again.

In the Kuhn Poker (Subfigure \ref{fig:efg_kuhn}) methods finding a pure portfolio have low exploitability compared to GCT. \epsdommixed performs as good as OPPP and BruteRM.

Additionally, experimental setup and tables with extended results can be found in the Appendix~\ref{app:efg_more_results}. The results show similar trends among all the experiments. 

\section{Conclusions}
\label{sec:conclusions}

We began by formalizing the concept of portfolios and the optimal portfolio selection problem. We established its computational difficulty as NP-hard. Our theoretical analysis further revealed that intuitive heuristics are often unreliable, with common approaches like relying on Nash Equilibrium support or greedy construction leading to provably poor solutions. We introduced analytical tools, centered on the concept of $\epsilon$-dominance, that provide provable quality guarantees for a portfolio. Furthermore, we proposed \epsdompure and \epsdommixed programs for both constructing portfolios with bounded exploitability, enabling the discovery of approximate Nash Equilibria by the player. We defined an \evaluation program for pessimistic evaluation. Our empirical evaluation confirmed the value of these methods, demonstrating that they outperform common heuristics and establish a robust performance benchmark.
We show that our approximate algorithms are, in practice, close to the optimal solutions. Additionally, our results on real games (EFGs) suggest that the currently used methods leave room for improvement. 

The primary impact of this research is the establishment of the theoretical and practical groundwork necessary to advance portfolio construction. While our proposed offline algorithms are not directly scalable to the immense games where these techniques are most critical, they provide the essential tools for developing and benchmarking the next generation of heuristics. Future work should focus on creating scalable, online algorithms that are informed by these theoretical insights. Future directions include designing an online algorithm that would be suitable for large imperfect-information games or improving the GCT method. Additionally, it would be studying different portfolio use cases

\section*{Acknowledgements}
This research is supported by the Czech Science Foundation\\ (GA25-18353S) and the Grant Agency of the CTU in Prague \\(SGS23/184/OHK3/3T/13). Computational resources were provided by the e-INFRA CZ project (ID:90254), supported by the Ministry of Education, Youth and Sports of the Czech Republic.

\bibliography{aaai2026}
\clearpage
\newpage

\appendix

\section{Proof of Theorem~\ref{thm:arbitrary_ne} and Observation~\ref{obs:size_portfolio_suppport} }
\label{app:proof_arbitrary_ne}
Proof of Theorem~\ref{thm:arbitrary_ne}.

\begin{proof}
We will demonstrate it on the following matrix game:
$$
U = \begin{bmatrix}
1 & 0 & \frac{1}{2} \\
0 & 1 & \frac{1}{2} \\
0 & 0 & \frac{1}{2} \\
\end{bmatrix}
$$
The only Nash equilibrium strategy for the row player is $(\frac{1}{2},\frac{1}{2},0)$ and the equilibria for the column player are for any $p\in[0,1] : (\frac{p}{2},\frac{p}{2},1-p)$. In particular, one equilibrium of the column player is $(0,0,1)$. However, if we use only the third column as the portfolio, the row player can pick any strategy as a Nash equilibrium of the restricted game.

In the pessimistic case, she will pick $(1,0,0)$, which would be responded by $(0,1,0)$, causing the cost of $\frac{1}{2}$. In the maximum entropy case, player 1 would play uniformly over all actions, which can lead to arbitrary small payoff, if we add copies of the last row into the game.
    
\end{proof}

Proof of the Observation~\ref{obs:size_portfolio_suppport}.
\begin{proof}
    Proof is based on the game from the proof of Theorem~\ref{thm:unique_ne}, which is the game with the following utility matrix, for any small $\delta \in (0,\frac{1}{2})$:
    $$
U = \begin{bmatrix}
1 & \delta & \frac{1}{2} \\
\delta & 1 & \frac{1}{2} \\
0 & 0 & \frac{1}{2} \\
\end{bmatrix}
$$
    The NE of player 2 is $\pi_2= (0,0,1)$, therefore it's support is of size 1. The value of this game is $\frac{1}{2}$. There are three pure portfolios of size 1:
    \begin{itemize}
        \item $P=\{(1,0,0)\}$, here player 1  chooses strategy $(1,0,0)$, to which best response of player 2 is $(0,1,0)$, causing exploitability of $\frac{1}{2} - \delta >0$
        \item $P=\{(0,1,0)\}$, here player 1  chooses strategy $(0,1,0)$, to which best response of player 2 is $(1,0,0)$, causing exploitability of $\frac{1}{2} - \delta >0$
        \item $P=\{(0,0,1)\}$, here player 1  chooses strategy $(0,0,1)$, because we are considering pessimistic exploitability. To this best response of player 2 is $(0,1,0)$, causing exploitability of $\frac{1}{2} - 0 = \frac{1}{2} >0$
    \end{itemize}  
    Therefore, no pure portfolio of the size of the support of NE of player 2 has exploitability of 0.
\end{proof}

\section{Proof of Theorems \ref{thm:game_max_expl} and \ref{thm:game_pes_exploit}}
\label{app:bad_games_proof}
Below is the proof of Theorem~\ref{thm:game_max_expl}.

\begin{proof}

One such game has the payoff matrix $U= -1 \cdot I$ where $I$ is the identity matrix of size n:
$$
U = \begin{bmatrix}
-1 & 0 & \ldots & 0 \\
0 & -1 & \ldots & 0 \\
\vdots & \vdots & \ddots & \vdots \\
0 & 0 & \ldots & -1
\end{bmatrix}
$$

In this game any pure portfolio of size $k<n$ will not include at least one column $i$. In the restricted game, the row $i$ will contain only zeros and hence playing it with probability 1 will be optimal for the row player regardless of what the column player does. However, choosing such an NE by the row player will lead to exploitability 1 in the full game as the column player will best respond with playing action $i$.
    
\end{proof}

Below is the proof of Theorem~\ref{thm:game_pes_exploit}.

\begin{proof}

One such game has the payoff matrix $-1 \cdot I$ where $I$ is the identity matrix.
$$
U = \begin{bmatrix}
-1 & 0 & \ldots & 0 \\
0 & -1 & \ldots & 0 \\
\vdots & \vdots & \ddots & \vdots \\
0 & 0 & \ldots & -1
\end{bmatrix}
$$
The payoff range in this game is $\Delta = 1- 0 =1$

Assume we write the portfolio strategies into columns of a matrix P and denote the strategy player 2 plays over the portfolio as $\vec{p}$. Then the problem of computing the NE in the restricted game can be written as
$$
\max_{\vec{x}}\min_{\vec{p}} \vec{x} U P \vec{p}^T
$$
That means that the players are effectively playing a matrix game $G=U P \in \mathcal{R}^{n\times k}$

For games with $k$ columns, we know there is a NE such that the support of the player 1 strategy is at most $k$. Hence, at least one of these strategies is played with probability at least $\frac{1}{k}$. We denote this strategy as $i$. Now, if the column player plays the pure strategy $i$ in the full game, the reward for player 1 will be at most $-\frac{1}{k}$.

The equilibrium in the full game, where both players play uniformly over all strategies, gives player 1 an expected reward of $-\frac{1}{n}$. Therefore, the pessimistic exploitability of any portfolio is at least $-\frac{1}{n}- \left(-\frac{1}{k}\right)= \frac{n-k}{nk}= \frac{n-k}{nk} \Delta $
    
\end{proof}

\section{Proof of the Theorem~\ref{thm:incremental}}
\label{app:proof_incremental}
\begin{proof}
One such game is defined by the following utility matrix(where $b_0,...,b_3$ denote the column player actions):

\begin{gather*}
\begin{matrix}
&b_0 && b_1 && b_2 && b_3 \\
\end{matrix}\\
U=\begin{bmatrix}
-1 & 1 & -101 & -99 \\
1 & -0.8 & -99 & -101 \\
\end{bmatrix}
\end{gather*}

Here, the portfolio $P=\{b_0, b_1\}$ is an optimal portfolio of size 2 with pessimistic exploitability 0. However, expanding it by adding any of $b_2$ or $b_3$ actions to P causes player 1 strategy to be pure, therefore having the exploitability of $ex_{PES}(G,P) = -100 + 101 = 1$ 
\end{proof}

\section{Proof of \epsdommixed Upper Bound }
\label{app:proof_eps_dom_mixed_milp}

\begin{proof}

Let $G$ be a game $G=(A_1,A_2, u)$ and $P={\{s_1,s_2,...s_k\}}\subseteq \Delta(A_2)$ be a set of mixed strategies, such that $A_2$ is $\epsilon$-dominated in game $G(A_1, A_2 \cup P)$. A strategy $\pi \in NE(G(A_1, P))$ is  then $\pi \in \epsilon-NE(G(A_1,A_2\cup P)$.
Because any strategy from $P$ can be also played in $A_2$, (there exists a mapping from $\Delta(A_2 \cup P)$ to $\Delta(A_2)$, strategy $\pi$ is also $\epsilon$-NE of game $G$, $\pi \in \epsilon-NE(G)$ .

Since the \epsdommixed finds such subset $P\subseteq\Delta(A_2)$ that  $\forall_{s_2 \in\Delta(A_2)} \exists_{s'_2 \in P} \forall_{s_1 \in \Delta(A_1)} U(s_1, s'_2) \leq U(s_1, s_2) +\epsilon$ it fulfills the conditions for $P$, hence, $ex(P)\leq \epsilon$.
\end{proof}

Note, there might be mixed portfolios for which $\epsilon$ is smaller than \epsdommixed, as it does not consider linear combinations over strategies from mixed portfolios (that would be non-linear). Nevertheless, \epsdommixed finds a portfolio with $\epsilon$ at least as small as \epsdompure.

\section{Evaluation of \epsdompure and \epsdommixed }
\label{app:eval_eps_dom_greedy}
Additional experiments were conducted for sizes 15 and 20 and are present on the Figure~\ref{fig:eps_dom_vs_greedy_more_results}. They were conducted for $k\in\{1,...,10\}$.

\begin{figure}[h]
     \centering
     \begin{subfigure}[b]{0.45\textwidth}
         \centering
         \includegraphics[width=\textwidth]{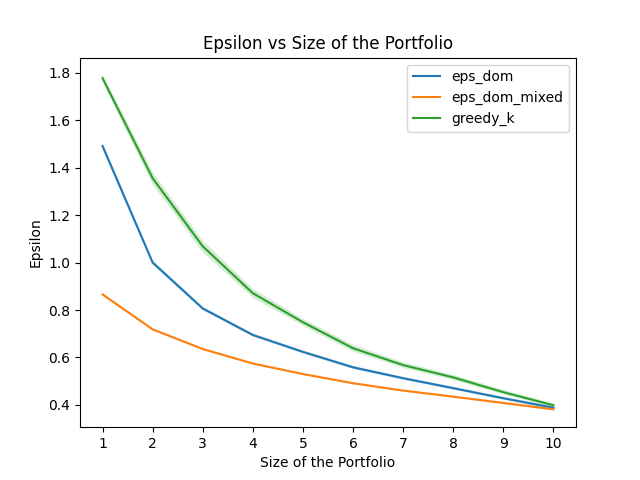}
        \caption{$|A_1|=|A_2|=15$}
     \end{subfigure}
     \begin{subfigure}[b]{0.45\textwidth}
         \centering
         \includegraphics[width=\textwidth]{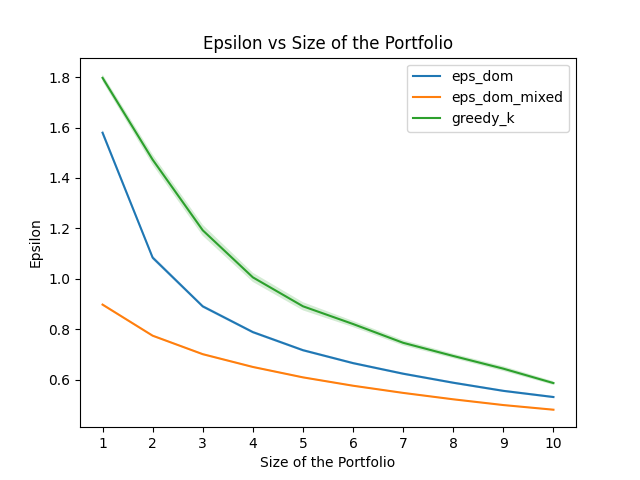}
        \caption{$|A_1|=|A_2|=20$}
     \end{subfigure}
     \caption{ Additional game sizes. Experiments on 100 random games. }
    \label{fig:eps_dom_vs_greedy_more_results}
\end{figure}

\section{Proof of the  Observation~\ref{obs:mixed_pure} and Proposition~\ref{prp:mixed_vs_pure}}
\label{app:proof_mixed_vs_pure}
Observation~\ref{obs:mixed_pure} is trivially proven because mixed portfolios contain pure portfolios. Below we present the proof of the Proposition~\ref{prp:mixed_vs_pure}.
\begin{proof}

\begin{table}[h]
\centering
\begin{tabular}{|l|l|l|l|}
\hline
& \textbf{R} & \textbf{P} & \textbf{S} \\ \hline
\textbf{R} & 0 & -1 & 1 \\ \hline
\textbf{P} & 1 & 0 & -1 \\ \hline
\textbf{S} & -1 & 1 & 0 \\ \hline
\end{tabular}
  \caption{Utility matrix of Rock Paper Scissors(RPS)}
  \label{tab:rps_utility}
\end{table}

\begin{table}[h]
\centering
\begin{tabular}{|l|l|l|l|}
        \hline
         & \textbf{R} & \textbf{P} \\ \hline
        \textbf{R} & 0 & -1  \\ \hline
        \textbf{P} & 1 & 0  \\ \hline
        \textbf{S} & -1 & 1  \\ \hline
\end{tabular}
\caption{Utility matrix of $G(\{R,P,S\}, \allowbreak \{R, P\})$}
\label{tab:pure_ut}
\end{table}

\begin{table}[h]
\centering
\begin{tabular}{|l|l|l|l|}
        \hline
         & \textbf{$\sigma_1$} & \textbf{$\sigma_2$} \\ \hline
        \textbf{R} & -0.5 & 0 \\ \hline
        \textbf{P} & 0.5 & -0.5 \\ \hline
        \textbf{S} & 0 & 0.5 \\ \hline
\end{tabular}
\caption{Utility matrix of $G(\{R,P,S\}, \allowbreak \{(0.5, 0.5, 0.0), \allowbreak (0.0, 0.5, 0.5)\})$}
\label{tab:mixed_ut}
\end{table}

We prove the opposite by counterexample. We will show that for a specific game, there exists a mixed portfolio with a lower exploitability than any pure portfolio. We will show that for optimistic and pessimistic utility, which is enough because those types bound others.
Let $G$ be a Rock Paper Scissors game (Table~\ref{tab:rps_utility}) and portfolio size be $k=2$. In this case, an optimal portfolio is any portfolio that contains two distinct actions, then without loss of generality $P_{pure} = \{ (1,0,0), (0,1,0) \}$ is the optimal pure portfolio. Utility of restricted game $G(\{R,P,S\},P_{pure})$ is presented in the Table~\ref{tab:pure_ut}.
The Ne found by optimistic and pessimistic NE selecting function both result in the following strategy of player 1: $\pi^*_{1} = (0, \frac{2}{3}, \frac{1}{3})$ and best response of player 2 in the original game $G$ to be $BR_2(\pi^*_{1}) =(0,0,1)$. Therefore, the exploitability of this portfolio is: $ex_{OPT}(G, P_{pure}) = ex_{PES}(G, P_{pure}) = |0 - \frac{2}{3}| \approx 0.67$

Now, we will show a mixed portfolio of size $k=2$, that has a lower exploitability for this game than $0.67$ with $P_{mixed} = \{(0.5, 0.
5, 0) ,  (0., 0.5, 0.5) \}$. Utility of restricted game $G(\{R,P,S\},P_{mixed})$ is presented in the Table~\ref{tab:mixed_ut}.
Again, the optimistic and pessimistic NE both result in the following strategy of player 1: $\pi^*_{1}  =(0, \frac{1}{3}, \frac{2}{3})$, with best response of player 2  to be $BR_2(\pi^*_{1}) =(0,0,1)$.  Therefore the exploitability of this portfolio is: $ex_{OPT}(G, P_{mixed}) =ex_{PES}(G, P_{mixed}) = | 0 - \frac{1}{3}| \approx 0.33$

\end{proof}


\section{Experimental Setup and More Results on EFG Games}
\label{app:efg_more_results}

\begin{table*}[t]
\centering
\begin{tabular}{l | ccc | ccc}
\hline
$ex_{RM+}$ & \multicolumn{3}{c|}{Blotto (F=3,C=6)} & \multicolumn{3}{c}{Blotto (F=3,C=7)} \\
\hline
k & 1 & 2 & 3 & 1 & 2 & 3 \\
\hline
OPPP        & 0.27 & 0.5  & 0.33 & 0.29 & 0.47 & 0.50 \\
DO          & 0.33 & 0.67 & 0.88 & 0.4  & 0.4  & 1.0  \\
GCT         & 1 $\pm$ 0.00 & 0.96$\pm$0.01 & 0.86$\pm$0.01 & 1$\pm$0.00 & 0.98$\pm$0.01 & 0.92$\pm$0.01 \\
\epsdompure & 0.71 & 0.75 & 0.24 & 0.40 & 1 & 0.15 \\
\epsdommixed& 0.24 & 0.33 & 0.33 & 0.09 & 0.56 & 0.55 \\
RandomMixed & 1$\pm$0.00 & 0.96$\pm$0.01 & 0.94$\pm$0.01 & 1$\pm$0.00 & 0.99$\pm$0.00 & 0.99$\pm$0.01 \\
OPP RM+     & 0.27 & 0.3  & 0.24 & 0.29 & 0.27 & 0.15 \\
\hline
\end{tabular}
\caption{}
\label{tab:blotto67_rm}
\end{table*}

\begin{table*}[t]
\centering
\begin{tabular}{l | ccc}
\hline
$ex_{RM+}$ & \multicolumn{3}{c}{Blotto (F=3,C=8)} \\
\hline
k & 1 & 2 & 3 \\
\hline
OPPP        & 0.32 & 0.38 & 0.33 \\
DO          & 0.57 & 1.0  & 1.0  \\
GCT         & 1$\pm$0.00 & 0.97$\pm$0.01 & 0.90 $\pm$ 0.01 \\
\epsdompure & 0.57 & 1 & 0.18 \\
\epsdommixed& 0.08 & 0.23 & 0.35 \\
RandomMixed & 1$\pm$0.00 & 0.98$\pm$0.01 & 0.96$\pm$0.01 \\
OPP RM+     & 0.32 & 0.29 & 0.18 \\
\hline
\end{tabular}
\caption{}
\label{tab:blotto8_rm}
\end{table*}

\begin{table*}[t]
\centering
\begin{tabular}{l | ccc | ccc}
\hline
$ex_{PES}$ & \multicolumn{3}{c|}{Blotto (F=3,C=6)} & \multicolumn{3}{c}{Blotto (F=3,C=7)} \\
\hline
k & 1 & 2 & 3 & 1 & 2 & 3 \\
\hline
OPPP        & 1 & 1 & 0.33 & 1 & 1 & 0.5 \\
DO          & 1 & 1 & 1    & 1 & 1 & 1   \\
GCT         & 1$\pm$0.00 & 0.98$\pm$0.01 & 0.87$\pm$0.01 & 1$\pm$0.00 & 0.98$\pm$0.01 & 0.93$\pm$0.01 \\
\epsdompure & 1 & 1 & 1    & 1 & 1 & 1   \\
\epsdommixed& 1 & 0.67 & 0.33 & 1 & 1 & 1 \\
RandomMixed & 1$\pm$0.00 & 0.96$\pm$0.01 & 0.94$\pm$0.01 & 1$\pm$0.00 & 1$\pm$0.00 & 0.98$\pm$0.01 \\
OPP RM+     & 1 & 1 & 1 & 1 & 1 & 1 \\
\hline
\end{tabular}
\caption{}
\label{tab:blotto67_pessimistic}
\end{table*}

\begin{table*}[t]
\centering
\begin{tabular}{l | ccc}
\hline
$ex_{PES}$ & \multicolumn{3}{c}{Blotto (F=3,C=8)} \\
\hline
k & 1 & 2 & 3 \\
\hline
OPPP        & 1 & 1 & 0.33 \\
DO          & 1 & 1 & 1 \\
GCT         & 1$\pm$0.00 & 0.99$\pm$0.01 & 0.90$\pm$0.01 \\
\epsdompure & 1 & 1 & 1 \\
\epsdommixed& 1 & 1 & 1 \\
OPP RM+     & 1 & 1 & 1 \\
\hline
\end{tabular}
\caption{}
\label{tab:blotto8_pessimistic}
\end{table*}

\begin{table*}[t]
\centering
\begin{tabular}{l | ccc | ccc}
\hline
$ex_{RM+}$ & \multicolumn{3}{c|}{Kuhn Poker (b=1.5)} & \multicolumn{3}{c}{Kuhn Poker (b=2)} \\
\hline
k & 1 & 2 & 3 & 1 & 2 & 3 \\
\hline
OPPP        & 0.05 & 0.01 & -    & 0.07 & 0 & - \\
DO          & 0.05 & 0.04 & 0.04 & 0.07 & 0.04 & 0.04 \\
GCT         & 0.26$\pm$0.01 & 0.17$\pm$0.01 & 0.14$\pm$0.01 & 0.25$\pm$0.01 & 0.16$\pm$0.01 & 0.12$\pm$0.01 \\
\epsdompure & 0.05 & 0.03 & 0.04 & 0.07 & 0 & 0 \\
\epsdommixed& 0.05 & 0.03 & 0.03 & 0.07 & 0 & 0 \\
RandomMixed & 0.16$\pm$0.00 & 0.16$\pm$0.00 & 0.16$\pm$0.00 & 0.25$\pm$0.01 & 0.25$\pm$0.01 & 0.25$\pm$0.01 \\
OPP RM+     & 0.05 & 0.01 & - & 0.02 & 0 & - \\
\hline
\end{tabular}
\caption{}
\label{tab:kuhn15_2_rm}
\end{table*}

\begin{table*}[t]
\centering
\begin{tabular}{l | ccc | ccc}
\hline
$ex_{RM+}$ & \multicolumn{3}{c|}{Kuhn Poker (b=2.5)} & \multicolumn{3}{c}{Kuhn Poker (b=3)} \\
\hline
k & 1 & 2 & 3 & 1 & 2 & 3 \\
\hline
OPPP         & 0.12 & 0 & -    & 0.17 & 0 & - \\
DO           & 0.12 & 0.06 & 0.06 & 0.08 & 0.13 & 0.11 \\
GCT          & 0.12$\pm$0.02 & 0.07$\pm$0.01 & 0.06$\pm$0.01 & 0.25$\pm$0.02 & 0.08$\pm$0.01 & 0.05$\pm$0.01 \\
\epsdompure  & 0.10 & 0 & 0.04 & 0.08 & 0 & 0.07 \\
\epsdommixed & 0.03 & 0 & 0.06 & 0 & 0 & 0.04 \\
RandomMixed  & 0.34$\pm$0.01 & 0.34$\pm$0.01 & 0.34$\pm$0.01 & 0.31$\pm$0.02 & 0.27$\pm$0.02 & 0.24$\pm$0.02 \\
OPP RM+      & 0.05 & 0 & - & 0.08 & 0 & - \\
\hline
\end{tabular}
\caption{}
\label{tab:kuhn_25_3_rm}
\end{table*}

\begin{table*}[t]
\centering
\begin{tabular}{l | ccc | ccc}
\hline
$ex_{PES}$ & \multicolumn{3}{c|}{Kuhn Poker (b=1.5)} & \multicolumn{3}{c}{Kuhn Poker (b=2)} \\
\hline
k & 1 & 2 & 3 & 1 & 2 & 3 \\
\hline
OPPP         & 0.05 & 0.01 & - & 0.07 & 0 & - \\
DO           & 0.05 & 0.04 & 0.04 & 0.07 & 0.04 & 0.04 \\
GCT          & 0.26$\pm$0.01 & 0.17$\pm$0.01 & 0.14$\pm$0.01 & 0.26$\pm$0.01 & 0.16$\pm$0.01 & 0.12$\pm$0.01 \\
\epsdompure  & 0.05 & 0.03 & 0.04 & 0.07 & 0 & 0 \\
\epsdommixed & 0.05 & 0.03 & 0.03 & 0.07 & 0 & 0 \\
RandomMixed  & 0.16$\pm$0.00 & 0.16$\pm$0.00 & 0.16$\pm$0.00 & 0.25$\pm$0.01 & 0.25$\pm$0.01 & 0.25$\pm$0.01 \\
OPP RM+      & 0.05 & 0.01 & - & 0.19 & 0 & - \\
\hline
\end{tabular}
\caption{}
\label{kuhn15_2_pessimistic}
\end{table*}

\begin{table*}[t]
\centering
\begin{tabular}{l | ccc | ccc}
\hline
$ex_{PES}$ & \multicolumn{3}{c|}{Kuhn Poker (b=2.5)} & \multicolumn{3}{c}{Kuhn Poker (b=3)} \\
\hline
k & 1 & 2 & 3 & 1 & 2 & 3 \\
\hline
OPPP         & 0.12 & 0 & - & 0.17 & 0 & - \\
DO           & 0.12 & 0.06 & 0.06 & 0.17 & 0.33 & 0.28 \\
GCT          & 0.12$\pm$0.02 & 0.07$\pm$0.01 & 0.05$\pm$0.01 & 0.26$\pm$0.03 & 0.08$\pm$0.01 & 0.06$\pm$0.01 \\
\epsdompure  & 0 & 0 & 0.04 & 0 & 0 & 0.08 \\
\epsdommixed & 0.03 & 0 & 0.06 & 0 & 0 & 0.04 \\
RandomMixed  & 0.34$\pm$0.01 & 0.34$\pm$0.01 & 0.34$\pm$0.01 & 0.31$\pm$0.02 & 0.27$\pm$0.02 & 0.24$\pm$0.02 \\
OPP RM+      & 0.17 & 0 & - & 0.17 & 0 & - \\
\hline
\end{tabular}
\caption{}
\label{tab:kuhn_25_3_pessimistic}
\end{table*}

\begin{table*}[t]
\centering
\begin{tabular}{l | ccc}
\hline
$ex_{RM+}$ & \multicolumn{3}{c}{Oshi Zumo (coins=4, horizon=2, min\_bid=1, size=3)} \\
\hline
k & 1 & 2 & 3 \\
\hline
DO           & 0.49 & 0.49 & 0.37 \\
GCT          & 0.27$\pm$0.03 & 0.01$\pm$0.01 & 0.0$\pm$0.0 \\
\epsdompure  & 0.58 & 0.37 & 0 \\
\epsdommixed & 0 & 0 & 0 \\
\hline
\end{tabular}
\caption{}
\label{tab:oshi_rm}
\end{table*}

\begin{table*}[t]
\centering
\begin{tabular}{l | ccc}
\hline
$ex_{PES}$ & \multicolumn{3}{c}{Oshi Zumo (coins=4, horizon=2, min\_bid=1, size=3)} \\
\hline
k & 1 & 2 & 3 \\
\hline
DO           & 1 & 1 & 1 \\
GCT          & 0.62$\pm$0.07 & 0.02$\pm$0.02 & 0.01$\pm$0.01 \\
\epsdompure  & 1 & 1 & 0 \\
\epsdommixed & 0 & 0 & 0 \\
RandomMixed  & 0$\pm$0.0 & 0$\pm$0.0 & 0$\pm$0.0 \\
\hline
\end{tabular}
\caption{}
\label{tab:oshi_zumo_pessimistic}
\end{table*}

\begin{table*}[t]
\centering
\begin{tabular}{l | ccc}
\hline
$ex_{RM+}$ & \multicolumn{3}{c}{Goofspiel 3} \\
\hline
k & 1 & 2 & 3 \\
\hline
OPPP        & 0 & 0 & 0 \\
DO          & 0.5 & 0 & 0 \\
GCT         & 0.04$\pm$0.01 & 0.04$\pm$0.01 & 0$\pm$0.00 \\
\epsdompure & 0.13 & 0.06 & 0.08 \\
\epsdommixed& 0 & 0 & 0 \\
RandomMixed & 0$\pm$0.00 & 0$\pm$0.00 & 0$\pm$0.00 \\
OPP RM+     & 0 & 0 & 0 \\
\hline
\end{tabular}
\caption{}
\label{tab:goofspiel_rm}
\end{table*}

\begin{table*}[t]
\centering
\begin{tabular}{l | ccc}
\hline
$ex_{PES}$ & \multicolumn{3}{c}{Goofspiel 3} \\
\hline
k & 1 & 2 & 3 \\
\hline
OPPP        & 0 & 0 & 0 \\
DO          & 1 & 0 & - \\
GCT         & 0.06$\pm$0.02 & 0.07$\pm$0.02 & 0.01$\pm$0.01 \\
\epsdompure & 1 & 0.5 & 0 \\
\epsdommixed& 0 & 0 & 0 \\
RandomMixed & 0$\pm$0.00 & 0$\pm$0.00 & 0$\pm$0.00 \\
OPP RM+     & 0 & 0 & 0 \\
\hline
\end{tabular}
\caption{}
\label{tab:goofspiel_pessimistic}
\end{table*}

\subsection{Experimental Setup}
For the RM+ evaluation, the number of steps for RM+ algorithm was 10k. GCT algorithm was also run with 10k RNAD steps, which results in up to 10k gradients. 

The experiments were run for seeds 10,11,...,59. For each run, the seed was used to set a numpy random generator, which was then used to generate any pseudo-random numbers, e.g. utility for a random game.

Experiments were run on the experimental cluster, only using CPU resources. Each experiment had a reserved 32 CPU cores and 10 GB of memory.

\subsection{More Results}
Blotto was run with 3 Fields and with 3 different coins values(6,7,8). This changes the size of the game. Results from pessimistic exploitability on Blotto are presented in the Table~\ref{tab:blotto67_pessimistic} and \ref{tab:blotto8_pessimistic}. In the Table~\ref{tab:blotto67_rm} and \ref{tab:blotto8_rm} RM+ exploitability results are shown.
\subsubsection{Kuhn Poker}
Kuhn poker was tested for different values of bet value (1.5, 2, 2.5, 3). Results from pessimistic exploitability on Kuhn Poker are presented in the Table~\ref{kuhn15_2_pessimistic} and~\ref{tab:kuhn_25_3_pessimistic}. In the Table~\ref{tab:kuhn15_2_rm} and \ref{tab:kuhn_25_3_rm} RM+ exploitability results are shown. 
In this game, for portfolio size $k=3$, results for methods OPPP and BruteRM are not presented as the methods required more than 15 hours to compute. 

\subsubsection{Goofspiel}
Results from pessimistic exploitability on Goofspiel 3 are presented in the Table~\ref{tab:goofspiel_pessimistic}  and in the Table~\ref{tab:goofspiel_rm} RM+ exploitability.

\subsubsection{Oshi Zumo}
Oshi Zumo is the additional domain of EFG that we have tested the methods on. However, we saw a similar tendency as in Goofspiel 3 game. Therefore, we left it for appendix. 

Oshi Zumo was tested on a variant with 4 coins(budget of the player, horizon 2 (how many moves can be played), minimum bid of 1, and size of 3 (for one player, the whole board is of size 7). Results with pessimistic exploitability on Oshi Zumo are presented in the Table~\ref{tab:oshi_zumo_pessimistic} and in the Table~\ref{tab:oshi_rm} RM+ exploitability.
\end{document}